\newcommand{\kibitz}[2]{\ifnum\Comments=1{\color{#1}{#2}}\fi}
\theoremstyle{plain}
\newtheorem{theorem}{Theorem}
\theoremstyle{definition}
\newtheorem{definition}{Definition}
\newcommand{\location}{L} 
\newcommand{\citet}[1]{\citeauthor{#1}~[\citeyear{#1}]}
\newcommand{\dtype}{\tau}       
\newcommand{\supt}{^{(\dtype)}} 
\newcommand{\dutil}{\pi}
\newcommand{\ra}{\rightarrow}
\newcommand{\im}{\Rightarrow}
\newcommand{\txtst}{~\mathrm{s.t.}~}
\newcommand{\one}[1]{\mathds{1} \{ #1\}}
\newcommand{\obj}{\mathrm{OBJ}}
\newcommand{\E}[1]{\mathbb{E} \left[ #1 \right] }
\title{Ridesharing with  Driver Location Preferences\thanks{The full version of this paper is available at \href{https://128.84.21.199/abs/1905.13191}{arXiv:1905.13191.}
}
}
\author{
Duncan Rheingans-Yoo$^1$%
\and
Scott Duke Kominers$^2$\and
Hongyao Ma$^{3}$\And
David C. Parkes$^3$
\affiliations
$^1$Harvard College\\
$^2$Harvard Business School and Department of Economics\\
$^3$Harvard School of Engineering and Applied Sciences
\emails
\{d\_rheingansyoo@college, kominers@fas, hma@seas, parkes@eecs\}.harvard.edu
}
\begin{document}

\maketitle

\begin{abstract}
We study revenue-optimal pricing and driver compensation in ridesharing platforms when drivers have heterogeneous preferences over locations. If a platform ignores drivers' location preferences, it may make inefficient trip dispatches; moreover, drivers may strategize so as to route towards their preferred locations. In a model with stationary and continuous demand and supply, we present a mechanism that incentivizes drivers to both (i) report their location preferences truthfully and (ii) always provide service. In settings with unconstrained driver supply or symmetric demand patterns, our mechanism achieves the full-information, first-best revenue. Under supply constraints and unbalanced demand, we show via simulation that our mechanism improves over existing mechanisms and has performance close to the first-best.
\end{abstract}

\section{Introduction} \label{sec:intro} 

Uber connected its first rider to a driver in the summer of 2009,\footnote{\url{https://www.uber.com/newsroom/history/}, visited 02/25/2019.} and since then, ridesharing platforms have dramatically changed the way people get around in urban areas. Ridesharing platforms allow a wide array of people to become drivers and---in contrast to traditional taxi systems---use dynamic ``surge pricing" at times when demand exceeds supply. Properly designed, dynamic pricing improves system efficiency~\cite{castillo2017surge}, increases driver supply~\cite{chen2015dynamic}, and makes the system reliable for riders~\cite{hall2015effects}.

A growing literature studies how to structure prices for riders and compensation for drivers so as to optimally account for variation in supply and demand \cite{banerjee2015pricing,Bimpikis2016,castillo2017surge,ma2018spatio}.
However, existing models leave aside driver heterogeneity. In practice, some drivers may 
prefer to drive in the city and others in the suburbs, and many prefer to 
end their days close to home. A matching system that treats drivers as homogeneous makes inefficient dispatches, with drivers preferring to fulfill each other's dispatches instead of their own.

The problem goes beyond simple efficiency loss. A core feature of ridesharing platforms is that drivers retain the flexibility to choose when and where to provide service. Every ride the platform proposes needs to be accepted voluntarily, forming an optimal response for the driver~\cite{ma2018spatio}. This incentive alignment simplifies participation for drivers and also makes behavior more predictable. Without accounting for heterogeneity, a platform cannot fully understand a driver's preferences or  achieve full incentive alignment. 

Indeed, platforms have experimented with methods to incorporate driver heterogeneity. As of Summer 2019, Uber allows drivers to  indicate---twice a day---that they would like to take trips in the direction of a particular location.\footnote{\url{https://help.uber.com/partners/article/set-a-driver-destination?nodeId=f3df375b-5bd4-4460-a5e9-afd84ba439b9}, visited 2/25/19.} However, mechanisms that account for driver preferences can also have unintended consequences if not designed properly. By saying ``I want to drive South," a driver biases her dispatches in a way that could in principle promote more profitable  trips.\footnote{\url{https://therideshareguy.com/uber-drops-destination-filters-back-to-2-trips-per-day/}, visited 2/25/19.}

In this paper, we introduce the study of driver location preference in a mechanism design framework. In Section~\ref{sec:preliminaries}, we adapt a model originally conceived by Bimpikis et al.~\shortcite{Bimpikis2016} to an economy where drivers prefer a particular location. In Section~\ref{sec:parm}, we present the \emph{Preference-Attentive Ridesharing Mechanism} (PARM), which elicits driver preferences and sets a revenue-optimal pricing policy. We show that PARM is incentive-compatible, and that it achieves the full-information, first-best revenue when supply is unconstrained or when demand is symmetric. In Section~\ref{sec:simulations}, we study settings with constrained supply and asymmetric demand, using simulations to compare the revenue and welfare performance of PARM to existing ridesharing mechanisms. We show that PARM achieves close to first-best revenue and typically outperforms even the best case for preference-oblivious pricing (where strategic behavior hurts efficiency). Proofs not presented in the text are deferred to Appendix~A of the full paper.

\subsection{Related Work} \label{sec:related_work}

Existing research on pricing and dispatching in ridesharing platforms does not account for   driver heterogeneity.

We build on work of Bimpikis et al.~\shortcite{Bimpikis2016}, who show that under a continuum model, and with stationary demand and unlimited supply, a ridesharing platform's revenue is maximized when the demand pattern across different locations is balanced. They  show via simulation that in comparison to setting a uniform price for all locations, pricing trips differently depending on trip origins improves revenue. 
Relative to the \citet{Bimpikis2016} model, we allow limited driver supply; moreover, each driver in our model has a preferred location. We thus introduce a reporting phase, in which drivers report their preferred locations. We then modify the matching and pricing formulations in order to align incentives.

\citet{ma2018spatio} study the incentive alignment of drivers in the presence of spatial imbalance and temporal variation of supply and demand. \citet{castillo2017surge} show that dynamic pricing mitigates  inefficient ``wild goose chase'' phenomena for platforms that employ myopic dispatching strategies. Modeling a shared vehicle system as a
continuous-time Markov chain, \citet{banerjee2017pricing} establish approximation guarantees for a static, state-independent pricing policy. \citet{ostrovsky2018carpooling} study the economy of self-driving cars, focusing on car-pooling and market equilibrium.
Queuing-theoretic approaches have also been adopted: \citet{banerjee2015pricing} show the robustness of dynamic pricing; \citet{afeche2018ride} study the impact of driver autonomy and platform control; and \citet{Besbes2018Spatial} analyze the relationship between capacity and performance.

There are various empirical studies, analyzing the impact of dynamic pricing~\cite{hall2015effects,chen2015dynamic}, the labor market for 
drivers~\cite{hall2016analysis,hall2017labor}, consumer surplus~\cite{cohen2016using}, the value of flexible work~\cite{chen2017value}, the gender earnings gap~\cite{cook2018gender}, and the commission vs.~medallion lease-based compensation models~\cite{angrist2017uber}.

\section{Model} \label{sec:preliminaries}

We consider a discrete time, infinite horizon model of a ridesharing network with discrete locations, 
$\location = \{1, \dots, n\}$. Following the baseline model of~Bimpikis et al.~\shortcite{Bimpikis2016}, we assume unit distances, i.e., it takes one period of time to travel in between any pair of locations. 
At the beginning of each time period, for each location $i \in \location$, there is a continuous mass $\theta_i \geq 0$ of riders requesting trips from $i$.
The fraction of riders at $i$ with destination $j \in \location$ is given by $\alpha_{ij} \in [0, 1]$ (thus $\sum_{j \in \location} \alpha_{ij}=1$). We assume that the components of rider demand $\theta = \{\theta_i\}_{i \in \location}$ and $\alpha = \{\alpha_{ij}\}_{i,j \in \location}$ are stationary and do not change over time. 
Riders' willingness to pay for trips are i.i.d.~random variables with CDF $F$. Thus, for any $i,j \in \location$, the number of trips demanded from $i$ to $j$ at price $p_{ij}\geq 0$ would be $\theta_i\alpha_{ij}(1-F(p_{ij}))$. (Riders who are unwilling to pay the stated prices for their rides leave the market.)

Each driver has a preferred location $\tau\in L$. Drivers receive $I \geq 0$ additional utility whenever they start a period in their preferred locations (irrespective of whether they have a rider); this preferred location is private information and represents a driver's {\em type}. For each location $\dtype \in \location$, the total mass of available drivers of type $\dtype$ is given by $s\supt \geq 0$.
Drivers have a discount factor of $\delta \in (0, 1)$, and an outside option that delivers  utility $w \geq 0$.\footnote{Throughout the paper, we consider $\delta$ to be very close to $1$--- this is natural, since an annual interest rate of 4\% implies an exponential discount factor of $0.9999992$ over the course of ten minutes.} 
We assume $\sum_{t = 0}^\infty \delta^t I = I/(1-\delta) < w$, meaning that the utility from being in one's favorite location at all times does not outweigh the outside opportunity.

A {\em ridesharing mechanism} elicits drivers' preferred locations, matches drivers and riders to trips, sets riders' trip prices and drivers' compensation, and (potentially) imposes drivers' penalties for strategic behavior. Before the beginning of the first time period, the mechanism elicits the preferred locations from potential drivers.

At the beginning of each period, a driver whose previous trip ended at location $i$ chooses whether or not to provide service at location $i$. If a driver provides service, the mechanism may dispatch that driver to (i) pick up some rider with trip origin $i$,  (ii) relocate to some location, or  (iii) stay in the same location. 
If a rider going from $i$ to $j$ is picked up by some driver, then the rider pays the platform the trip price $p_{ij} \geq 0$. If a driver of reported type $\dtype$ is dispatched from $i$ to $j$, the platform pays them $c_{ij}\supt \geq 0$, regardless of if her dispatch was to pick up a rider or relocate.%
\footnote{It bears mentioning that $c_{ij}\supt$ need not be a fixed proportion of $p_{ij}$. In fact, Bimpikis et al.~\shortcite{Bimpikis2016} find that for certain types of networks, making driver compensation a fixed proportion of trip price drastically reduces platform revenue.}
Drivers who choose not to provide service in a period can relocate to any location $j$ in the network, are not compensated by the mechanism in this period, and may be charged a penalty $P_j$.\footnote{Drivers only choose whether to provide service at a location and cannot decline dispatches based on the trip destination. This is consistent with current ridesharing platforms, which hide trip destinations because of concern that drivers will cherry pick rides.
}
Denote $p \triangleq \{ p_{ij} \}_{i, j \in \location}$, $c \triangleq \{ c_{ij}\supt \}_{i, j, \dtype \in \location}$ and $P\triangleq\{P_j\}_{j\in\location}$.

Based on rider demand $(\theta, \alpha)$ and the reported supply of drivers of each type, a  mechanism determines rider and driver flow, trip prices $p$, driver compensation $c$, and driver penalties $P$. Drivers decide whether or not to participate, considering pricing, penalties, and their outside options. Given entry decisions by drivers, and decisions made by drivers since
entry, the platform then dispatches drivers to trips and processes payments and penalties accordingly in each period.

\subsection{Steady-State Equilibrium}

In this section we will analyze a steady-state equilibrium while ignoring penalties. This will be helpful because it establishes that under truthful reporting of types, drivers will always follow the proposed dispatches. We will separately handle incentives to report truthful types, considering the effect of penalties on aligning these incentives. It will turn out that drivers are only charged penalties for the first time they deviate, not for future deviations.

At the beginning of each period, let $x_{i}\supt$ be the number of drivers of reported type $\dtype$ at location $i$,
and let $x\supt \triangleq \sum_{i \in \location} x_{i}\supt$ denote the total number of drivers of reported type $\dtype$ on the platform.
Let the \emph{trip flow} be $f \triangleq \{f_{ij}\supt\}_{i,j,\dtype\in \location}$, where $f_{ij}\supt \geq 0$ is the number of riders from $i$ to $j$ assigned to drivers of type $\dtype$. Let $y_{ij}\supt \geq 0$ be the mass of drivers of type $\dtype$ at $i$ who are dispatched to relocate to $j$ without a rider, and set $x \triangleq \{ x_{i} \supt \}_{i, \dtype \in \location}$ and $y \triangleq \{ y_{ij} \supt \} _{i,j,\dtype \in \location}$. No driver or rider can be matched multiple times in the same period, so assuming drivers always provide service, we have $\sum_{j \in \location} f_{ij}\supt +y_{ij}\supt\leq x_{i}\supt$ for all $i \in \location$ and all $\dtype \in \location$, and $\sum_{\dtype \in \location} f_{ij} \supt \leq \theta_{i}\alpha_{ij}(1-F(p_{ij}))$ and for all $i,j \in \location$. \vspace{0.3em}

For a trip with origin $i$ and destination $j$, if the total rider demand exceeds driver supply (i.e., if $\sum_{\dtype \in \location} f_{ij}\supt < \theta_{i}\alpha_{ij}(1-F(p_{ij}))$), the mechanism may increase the trip price $p_{ij}$ and achieve higher revenue. Therefore for revenue optimization, we can assume without loss that  $\sum_{\dtype \in \location} f_{ij}\supt= \theta_{i}\alpha_{ij}(1-F(p_{ij}))$. 
When $x_{i}\supt > 0$, meaning that some drivers with reported type~$\dtype$ are at location $i$, the probability that a given driver of reported type $\dtype$ is dispatched to destination $j$ is $(f_{ij}\supt+y_{ij}\supt)/x_i\supt$. 
Assuming a driver of type $\dtype$ has truthfully reported her type and will provide service in all periods, her lifetime expected utility for starting from location $i$ is of the form
\begin{align}
    \dutil_{i}\supt = & \sum_{j \in \location} (c_{ij}\supt + \delta\dutil_{j}\supt)  \frac{f_{ij}\supt+y_{ij}\supt}{x_i\supt} + I \cdot \one{i = \dtype}, \label{equ:driver_utility}
\end{align}
where $\one{\cdot}$ is the indicator function.
The first term in \eqref{equ:driver_utility} is the expected compensation and future utility a driver gets when dispatched to one of the $n$ possible destinations. The second term corresponds to the idiosyncratic utility  drivers get from starting trips in their favorite locations.

\begin{definition}[Steady-State Equilibrium]\label{defn:sse}
A \textit{steady-state equilibrium} under pricing policy  $(p, c)$
is a tuple $(f, x, y)$ s.t.:
\begin{enumerate}[({C}1)]
    \item (Driver best-response) Drivers providing service always maximizes their payoff, i.e.~$\forall i,\dtype \in \location$, $x_i\supt> 0 \im \forall k \in \location, ~\pi_i\supt \geq I\cdot\one{i = \dtype} + \delta \dutil_{k}\supt $. 
	\item (Flow balance) For all locations $i \in \location$ and  driver types $\dtype \in \location$, $x_{i} \supt = \sum_{j \in \location} f_{ji} \supt + y_{ji} \supt$.
	\item (Market-clearing) $\sum_{\dtype \in \location} f_{ij}\supt = \theta_i \alpha_{ij}(1-F(p_{ij}))$.
	\item (Individually rational driver entry) Participating drivers get at least their outside option $w$; with excess supply of drivers with type $\dtype$, all participating type-$\dtype$ drivers get exactly their outside option $w$. 
	\item (Feasibility) Rider and driver flows are non-negative, i.e., $\forall i,j,\dtype \in \location$, $f_{ij}\supt$, $y_{ij}\supt$, $x_{i}\supt\geq 0$; the supply constraints are satisfied, i.e., $\forall \dtype \in \location$, $\sum_{i \in \location} x_{i}\supt \leq s\supt$.
\end{enumerate}
\end{definition}

The {\em full information first best revenue} (FB) is the highest revenue a mechanism can achieve in stationary-state equilibrium, if the mechanism has full knowledge of driver types (therefore does not need to determine dispatching and compensation in order to incentivize truthful reporting of types):
\begin{align}
	\max_{p,c} &  \sum_{i \in \location} \sum_{j \in \location}\sum_{\dtype \in \location} p_{ij}\cdot f_{ij} \supt -c_{ij} \supt (f_{ij}\supt + y_{ij}\supt) \label{equ:rev_opt} \\
	\txtst & 
	(f,x,y) \text{ is a steady-state equilibrium under } (p,c). \notag 
\end{align}

The design problem is to compute rider prices $p$, driver compensation $c$, and driver penalties $P$ to optimize platform revenue in the steady state equilibrium, in a way that drivers will truthfully report their location preferences and will choose to always provide service.

\section{The Preference-Attentive Ridesharing Mechanism (PARM)}\label{sec:parm}

We now introduce our \emph{Preference-Attentive Ridesharing Mechanism} (\emph{PARM}) and show that this mechanism (i) truthfully elicits drivers' location preferences, (ii) incentivizes drivers to provide service, and (iii) achieves first-best revenue when supply is unconstrained or when demand is symmetric. 

\subsection{Alternate Form of the Optimization} 

The optimization problem \eqref{equ:rev_opt} need not be convex, and moreover, even when an optimal solution can be found, it may not incentivize drivers to report their types truthfully.
Denoting $W \triangleq w(1-\delta)$, we present an alternative problem \eqref{equ:rev_opt_rewritten}, which guarantees that any optimal solution can be converted into an optimal solution for \eqref{equ:rev_opt} using compensation scheme \eqref{equ:compensation}---while preserving the objective.
Specifically, we consider: 
\begin{align}
	\max_{p,f,x,y} 
	~& \Big( \hspace{-0.1em}  \sum_{i,j,\dtype \in \location} \hspace{-0.1em} f_{ij} \supt p_{ij}\Big) \hspace{-0.1em} - \hspace{-0.1em} W \sum_{i,\dtype \in \location }x_{i} \supt \hspace{-0.1em} + \hspace{-0.1em} I \sum_{\dtype \in \location} x_{\dtype} \supt  \label{equ:rev_opt_rewritten} \\
	\txtst &  x_{i} \supt  = \sum_{j \in \location} f_{ji} \supt +\sum_{j \in \location} y_{ji} \supt,~\forall i \in \location, ~\forall \dtype \in \location \notag \\
	 &  \sum_{\dtype \in \location} f_{ij} \supt = \theta_i \alpha_{ij}(1-F(p_{ij})), ~\forall i, j \in \location \notag  \\
	 &  \sum_{i \in \location}x_{i} \supt \leq  s \supt, ~\forall \dtype \in \location  \notag  \\
	 &  \sum_{j \in \location} y_{ij} \supt =  x_{i} \supt - \sum_{j \in \location} f_{ij} \supt, ~\forall i \in \location,~ \forall \dtype \in \location  \notag  \\
	 & f_{ij}\supt, y_{ij}\supt, x_{i}\supt  \geq 0, ~ \forall i,j \in \location,~ \forall \dtype \in \location.   \notag
\end{align} 
Our approach is analogous to a similar move by Bimpikis et al.~\shortcite{Bimpikis2016}--- assuming that $F$ is distributed $\mathrm{U}[0,1]$, the solution space is convex, and the optimization problem is quadratic. We also go a step further by accounting for driver heterogeneity and the possibility of zero demand at a location, the latter by paying drivers for relocation dispatches.

Consider the following compensation scheme: 
\begin{align}
    c_{ij} \supt \hspace{-0.2em} =
       \hspace{-0.2em} W \hspace{-0.2em} -I\cdot\one{i=\dtype}, \forall i,j, \dtype \in \location. \label{equ:compensation}
\end{align}

\begin{restatable}{lemma}{lemcformseq}
\label{lem:c_forms_eq}
Consider an optimal solution $(p,f,x,y)$ to problem \eqref{equ:rev_opt_rewritten}, and let $c$ be the compensation scheme \eqref{equ:compensation}. Then:
\begin{enumerate}[(i)]
    \itemsep 0em
    \item $(p,c)$ is an optimal solution to optimization problem \eqref{equ:rev_opt} with steady-state equilibrium $(f,x,y)$; and
    \item The expected lifetime utility (payment and location value) of a truthful
    driver is exactly $w$ starting from every location, i.e.~$\pi_i^{(\dtype)}=w$ for all $i,\dtype\in\location$.
\end{enumerate} %
\end{restatable}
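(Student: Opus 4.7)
The plan is to establish part (ii) first by solving the driver utility recurrence explicitly, then use it to verify each steady-state equilibrium condition for part (i), and finally prove optimality of $(p,c)$ via a compensation--utility identity together with individual rationality. The main obstacle will be showing that the objective of \eqref{equ:rev_opt_rewritten} upper-bounds the objective of \eqref{equ:rev_opt} at every feasible solution.

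For part (ii), I would guess $\dutil_i\supt = w$ for all $i,\dtype\in\location$ and verify it satisfies \eqref{equ:driver_utility}. The constraint $\sum_j(f_{ij}\supt + y_{ij}\supt) = x_i\supt$ from \eqref{equ:rev_opt_rewritten} makes the dispatch probabilities sum to one, so the recurrence collapses to the average per-dispatch compensation plus $\delta w + I\cdot\one{i=\dtype}$. Substituting the formula~\eqref{equ:compensation} cancels the $I$ term and leaves $W + \delta w = w$, confirming the guess.

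For part (i), conditions C2, C3, and C5 of Definition~\ref{defn:sse} appear verbatim among the constraints of \eqref{equ:rev_opt_rewritten}, and C4 (IR) holds with equality by~(ii). For C1, the value of deviating (not providing service and relocating to any $k$) is $I\cdot\one{i=\dtype} + \delta w$, so the required inequality reduces to $W \geq I$, which follows from the standing assumption $I/(1-\delta) < w$. To check that the objectives coincide at this solution, I would use $\sum_j(f_{ij}\supt + y_{ij}\supt) = x_i\supt$ together with \eqref{equ:compensation} to rewrite the total compensation as $W\sum_{i,\dtype} x_i\supt - I\sum_\dtype x_\dtype\supt$; subtracting this from rider revenue exactly recovers the objective of \eqref{equ:rev_opt_rewritten}, which gives one direction of optimality.

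The reverse direction is the core step. For any feasible $(p',c')$ of \eqref{equ:rev_opt} with steady-state equilibrium $(f',x',y')$, I would multiply \eqref{equ:driver_utility} by $x_i\supt$, sum over $i$, and telescope the $\delta\dutil$ terms using flow balance $\sum_i(f_{ij}\supt + y_{ij}\supt) = x_j\supt$. This yields, for each $\dtype$, the identity $(1-\delta)\sum_i x_i\supt \dutil_i\supt = \sum_{i,j} c_{ij}\supt(f_{ij}\supt + y_{ij}\supt) + I\cdot x_\dtype\supt$. Applying IR ($\dutil_i\supt \geq w$) and summing over $\dtype$ gives $\sum_{i,j,\dtype} c_{ij}\supt(f_{ij}\supt + y_{ij}\supt) \geq W\sum_{i,\dtype} x_i\supt - I\sum_\dtype x_\dtype\supt$, so the objective of \eqref{equ:rev_opt} at $(p',c')$ is at most the objective of \eqref{equ:rev_opt_rewritten} at $(p',f',x',y')$. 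Combined with the earlier direction, this shows that $(p,c)$ is optimal for \eqref{equ:rev_opt}.
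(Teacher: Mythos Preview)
Your proposal is correct and follows essentially the same approach as the paper: verify that the constraints of \eqref{equ:rev_opt_rewritten} coincide with (C2), (C3), (C5), show that the compensation~\eqref{equ:compensation} makes $\dutil_i\supt=w$ everywhere (giving (C1) and (C4)), and use (C4) to argue no compensation scheme can pay less. The one place you go beyond the paper is the reverse direction of optimality: where the paper simply asserts ``by (C4) drivers cannot be making less than $w$,'' you derive this rigorously via the telescoping identity $(1-\delta)\sum_i x_i\supt\dutil_i\supt=\sum_{i,j}c_{ij}\supt(f_{ij}\supt+y_{ij}\supt)+I x_\dtype\supt$, which is a clean way to turn the pointwise IR bound into the required lower bound on total compensation.
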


Briefly, feasible solutions to \eqref{equ:rev_opt_rewritten} satisfy conditions (C2), (C3) and (C5). Moreover, with $W>I$, the 
compensation $c$ as in \eqref{equ:compensation} is non-negative.
Given \eqref{equ:compensation}, drivers receive utility $W$ in expectation per period (so $w$ over their lifetimes); this implies (C1) and (C4). Furthermore, the solution is optimal, since no compensation scheme can lower the total payment to drivers while fulfilling the same rider trip flow $f$.

\subsection{Constructing PARM}
\begin{definition} \label{defn:parm}
Given rider demand $(\theta, ~\alpha)$, the \textit{Preference-Attentive Ridesharing Mechanism} (PARM):
\begin{enumerate}[1.]
    \itemsep0em
    \item Elicits the location preferences from drivers.
    \item Solves \eqref{equ:rev_opt_rewritten} with an additional constraint
    \begin{align}
        x_{\dtype} \supt \geq x_{i} \supt, ~\forall i \in \location, ~ \forall \dtype \in \location \label{equ:ic_constraint}
    \end{align}
    for dispatching and pricing, and determine driver compensation by \eqref{equ:compensation}.
    \item If a driver with reported type-$\dtype$ did not provide service and relocated to location $i\neq\dtype \in \location$, the platform treats her as a type-$i$ driver from then on. If this is the first deviation for this driver, the driver pays penalty $P_\dtype\triangleq \max \{ \max_{k \in \location} \{ P^{k \ra \dtype} \},~ 0\}$ for $P^{k \ra \dtype}$ as solved for in the following linear system:
    \begin{align}
        \pi_i^{k\ra\dtype} \hspace{-0.2em} = & \one{i\neq\dtype} \left(W+\delta\sum_j \dfrac{f_{ij}\supt+y_{ij}\supt}{x_i \supt }\pi_j^{k\ra\dtype}\right) + \notag \\
        &  \one{i=\dtype}(\delta w \hspace{-0.1em} - \hspace{-0.1em} P^{k\ra\dtype}) \hspace{-0.1em} + \hspace{-0.1em} \one{i=k}I, ~\forall i,k \in \location; \notag \\
        w=&\sum_i \pi_i^{k\ra\dtype}  x_i\supt / x\supt
         , ~\forall k \in \location. \label{equ:misreport_total}
    \end{align}
\end{enumerate}
\end{definition}

The system \eqref{equ:misreport_total} has $n^2+n$ linear equations and $n^2+n$ unknowns ($n^2$ of the $\pi_i^{k\ra\dtype}$ and $n$ of the $P^{k\ra\dtype}$).
Intuitively, $\pi_i^{k \rightarrow \dtype}$ is the expected utility of a driver of type $k$ pretending to be of type $\dtype$ and providing service everywhere except $\dtype$, where she instead relocates to $k$. By construction, $P^{k \ra \dtype}$ is the minimum penalty needed to equalize driver earnings between this deviation and truth telling plus always providing service. We take the maximum over such penalties so no driver can benefit from pretending to be of type $\dtype$ and employ this strategy.
\footnote{
The penalty $P_\dtype$ is set to zero if $P^{k \ra \dtype} < 0$ for all $k \in \location$. This case arises if this deviation is itself bad for drivers of all types, in which case the only way to make the deviating drivers' utility equal to $w$ is to pay those drivers.}

If a driver declines to provide service but relocates to her reported preferred location, she is charged no penalty. A driver might have a legitimate (idiosyncratic) reason for not being able to provide service in a period, but if she relocates to a location she did not report as preferred, that is taken as an indication that her original report was not truthful.

We now prove, under the  assumption that drivers always provide service and as a result are never charged any penalty, that imposing \eqref{equ:ic_constraint} is sufficient to guarantee truthful reporting.

\begin{theorem} \label{thm:IC_assuming_no_deviation}
Assuming all drivers always provide service,
it is a dominant strategy for drivers to  report their location preferences truthfully under PARM. 
\end{theorem}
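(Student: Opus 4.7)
The plan is to compare, for a driver of true type $k$, the expected steady-state utility of truthful reporting against misreporting some $\dtype\neq k$, under the assumption (built into the theorem) that she always provides service in both cases. By Lemma~\ref{lem:c_forms_eq}(ii), the truthful strategy yields lifetime utility exactly $w$ from every starting location, so it suffices to bound the misreporter's expected utility in the steady state she induces.

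First, I would write the Bellman equation for the misreporter. She is dispatched under the mechanism's type-$\dtype$ flow, so her one-step transition kernel from $i$ to $j$ is $q_{ij}\supt := (f_{ij}\supt + y_{ij}\supt)/x_i\supt$ and per trip she earns $c_{ij}\supt = W - I\one{i=\dtype}$, while collecting the idiosyncratic bonus $I$ only at her \emph{true} favorite $k$. Writing $\tilde{\dutil}_i^{k\ra\dtype}$ for her lifetime utility from $i$ and using $\sum_j q_{ij}\supt = 1$, this simplifies to
\begin{align*}
\tilde{\dutil}_i^{k\ra\dtype} = \bigl(W - I\one{i=\dtype} + I\one{i=k}\bigr) + \delta \sum_j q_{ij}\supt\, \tilde{\dutil}_j^{k\ra\dtype}.
\end{align*}
Next, I would take the weighted average of this equation with weights $x_i\supt/x\supt$ and sum over $i$. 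Flow balance (C2), $x_j\supt = \sum_i (f_{ij}\supt + y_{ij}\supt)$, is precisely the statement that $\{x_i\supt/x\supt\}_i$ is stationary for $q\supt$, so the discounted continuation terms telescope and I obtain
\begin{align*}
\sum_i \tfrac{x_i\supt}{x\supt}\, \tilde{\dutil}_i^{k\ra\dtype} = \frac{W + I\,(x_k\supt - x_\dtype\supt)/x\supt}{1-\delta} = w + \frac{I\,(x_k\supt - x_\dtype\supt)}{(1-\delta)\,x\supt}.
\end{align*}

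Finally, I would invoke the IC constraint~\eqref{equ:ic_constraint}, $x_\dtype\supt \geq x_k\supt$, which forces the right-hand side to be at most $w$; since truthful reporting delivers $w$ from every starting location, misreporting is weakly dominated in steady-state expectation. The main obstacle is conceptual rather than computational: the bound is naturally phrased in expectation over the stationary distribution rather than pointwise in the starting location---a misreporter's value starting at her true favorite $k$ can transiently exceed $w$, but is offset by lower value at $\dtype$, and constraint~\eqref{equ:ic_constraint} is engineered precisely so that the stationary-distribution average comes out at most $w$. This matches the continuum, steady-state framing in which a driver's equilibrium utility is most naturally evaluated under the stationary location distribution implied by her reported type.
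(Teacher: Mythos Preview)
Your proposal is correct and follows essentially the same approach as the paper: compute the misreporter's per-period payoffs ($W+I$ at her true favorite $k$, $W-I$ at the reported $\dtype$, $W$ elsewhere), average over the stationary distribution $\{x_i\supt/x\supt\}$ of the type-$\dtype$ dispatch chain, and invoke the IC constraint~\eqref{equ:ic_constraint}. The only difference is presentational---the paper argues informally via $\delta\to 1$ (so long-run time shares coincide with the stationary distribution), whereas you carry out the Bellman-and-telescope computation exactly for general $\delta$, arriving at the same stationary-weighted comparison; your explicit acknowledgment that the bound is on the stationary-distribution expectation rather than pointwise in the starting location is precisely the interpretation the paper adopts (cf.\ the ``random initialization'' weighting in~\eqref{equ:misreport_total}).
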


\begin{proof} 
Observe that by being truthful, each driver gets utility $W \hspace{-0.1em} = \hspace{-0.1em} w(1-\delta)$ per period---getting paid $W-I$ at  preferred locations, and $W$ at every other location. As a result, $\pi_i\supt \hspace{-0.1em}  = \hspace{-0.1em} w$ for all $i,\dtype \in \location$.
Suppose an infinitesimal driver of type $k \in \location$ reports she is of type $\dtype \neq k$. At all $i \neq k,\dtype$ she gains utility $W$ per period.
At $k$, she makes $W+I$ because the platform, treating her as a type-$\dtype$ driver, is still paying her $W$. 
At $\dtype$, she is paid $W-I$ and does not get the extra utility $I$.

With $\delta \rightarrow 1$, misreporting  $\dtype$ in place of $k$ leads to an increase in the expected payoff in static steady-state equilibrium if and only if in equilibrium, the driver with reported type $\dtype$ spends more time in location $k$ than in location $\dtype$. Considering the location of a driver with reported type $\dtype$ as a Markov chain, then $\{x_i\supt/x\supt\}_{i \in \location}$ is the stationary distribution. \eqref{equ:ic_constraint} then guarantees that a driver with reported type $\dtype$ spends a plurality of her time at location $\dtype$, therefore no driver benefits from misreporting her type if all drivers always provide service.
\end{proof}

We now consider drivers who may strategically decline to provide service and show such deviations are not useful under PARM, which updates its belief about a driver's type after deviations and imposes a penalty on the first such deviation.

\begin{restatable}{theorem}{thmparmic}\label{thm:parm_IC}
Under PARM, it is an ex post Nash equilibrium  for drivers to report their types truthfully and to always provide service.
\end{restatable}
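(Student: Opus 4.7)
The plan is to extend Theorem~\ref{thm:IC_assuming_no_deviation}---which covers misreport-only deviations---by ruling out strategies that also decline service. By Lemma~\ref{lem:c_forms_eq}, a driver who reports truthfully and always provides service obtains lifetime utility exactly $w$ from every location, equal to her outside option. The goal is to show that no unilateral deviation yields strictly greater utility.

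First, I would exploit the continuum of drivers: a single infinitesimal deviation does not alter the aggregate flows $(f,x,y)$ or the dispatch probabilities $(f_{ij}\supt + y_{ij}\supt)/x_i\supt$, so the analysis reduces to comparing one driver's lifetime payoff against $w$. I would then split deviations into (a) misreport-only strategies, already handled by Theorem~\ref{thm:IC_assuming_no_deviation}; and (b) strategies that at some point decline service and relocate to a location other than the reported type.

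Second, I would show that the penalty construction in \eqref{equ:misreport_total} exactly neutralizes the key instance of (b): a true type-$k$ driver who reports $\dtype$, serves at every location except $\dtype$, and at $\dtype$ relocates to $k$. By definition, $\pi_i^{k\ra\dtype}$ is the continuation utility of this strategy from location $i$, and $P^{k\ra\dtype}$ is chosen so that the stationary average $\sum_i \pi_i^{k\ra\dtype} x_i\supt / x\supt$ equals $w$. Setting $P_\dtype = \max_k \max\{P^{k\ra\dtype}, 0\}$ gives a (weakly) larger penalty, so this strategy is worth at most $w$ for every true type $k$.

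Third, I would argue that this ``report $\dtype$, escape to $k$'' family dominates all other non-service deviations. Because PARM reclassifies a deviating driver to the type of the location she relocates to, any strategy decomposes into a prefix of serving under reported type $\dtype$ followed by a continuation that, after reclassification, yields at most $w$ (by Lemma~\ref{lem:c_forms_eq} applied to the new reported type, noting that only the first deviation is penalized). Partial, randomized, and multi-round deviations can be unrolled into such prefix-continuation pairs, each bounded above by the $P_\dtype$ construction. I expect the main obstacle to be making this dominance argument rigorous---showing in particular that mixing across escape targets, stopping times, or mid-strategy reclassifications cannot beat the worst single-template strategy that $P_\dtype$ is designed against. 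Once that decomposition is established, the ex post Nash conclusion is immediate: the entire analysis depends only on aggregate continuum statistics and not on the realized type profile of other drivers.
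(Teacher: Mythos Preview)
Your overall architecture matches the paper's: reduce to the single canonical deviation ``report $\dtype$, serve everywhere except $\dtype$, escape to $k$,'' and then observe that $P_\dtype$ is calibrated to make that strategy worth at most $w$. The paper carries out exactly this plan, invoking Lemma~\ref{lem:strat} for the reduction.

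The gap in your proposal is in step three. You write that after reclassification the continuation ``yields at most $w$ (by Lemma~\ref{lem:c_forms_eq} applied to the new reported type).'' But Lemma~\ref{lem:c_forms_eq} only says a \emph{truthful} driver who always serves earns exactly $w$; it does not bound the continuation of a driver whose true type $k$ differs from her newly assigned type $i$. Such a driver faces no further penalty, so she could in principle keep declining service and hopping types---your unrolling does not rule out that this earns more than $w$. The paper closes this via Lemma~\ref{lem:strat}, which establishes two facts your decomposition needs but does not supply: (i) the optimal relocation target is always the true type $k$ (so the continuation after the first deviation really is the truthful-type-$k$ problem, where Lemma~\ref{lem:c_forms_eq} does apply), and (ii) the only location at which declining service can help is the reported type $\dtype$ (so the prefix is exactly the one $P_\dtype$ is designed against). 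Without these, ``unrolling into prefix-continuation pairs'' does not pin down which prefix and which continuation, and the bound by $P_\dtype$ is not justified. You should invoke Lemma~\ref{lem:strat} directly rather than attempt a freestanding dominance argument.
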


Briefly, Theorem~\ref{thm:IC_assuming_no_deviation} and the following Lemma~\ref{lem:strat} imply that (i) a profitable misreport must be paired with post-reporting deviation(s), and (ii) the most profitable deviation must be the driver providing service everywhere except her reported preferred location. The penalties ensure the driver does not get a utility higher than $w$ from this deviation (or any other), so there does not exist a profitable deviation.

Drivers are never charged any penalty under the equilibrium outcome, but the threat of a penalty is necessary to ensure truthful reporting. In certain special economies, a misreporting driver might spend many periods at her true preferred location before being sent to her reported preferred location. Without penalties, she may simply decline service and relocate back to her actual preferred location, thereby sacrificing one period of income for the possibility of many periods of extra idiosyncratic utility. See Appendix~B of the full paper for an example and discussions.

\begin{restatable}{lemma}{lemstrat} \label{lem:strat}
Consider a driver of true type $k \in \location$ and reported type $\dtype \in \location$, and assume that the rest of the drivers always provide service. If $\dtype = k$ (truthful), always providing service is a best response. If $\dtype \neq k$, one of the following is a best-response: (i) always providing service, or (ii) providing service at every location except $\dtype$, where the driver instead drives to $k$.
\end{restatable}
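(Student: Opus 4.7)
The plan is to model the driver's strategic problem as a stationary Markov decision process with state consisting of (current location, currently-treated type, indicator for whether the first deviation has occurred), and to apply the one-shot deviation principle to the value function $V_i$ of a true-type-$k$ driver reported as $\dtype$ at location $i$ before any deviation. I first establish a \emph{post-deviation bound}: after the driver has incurred the first deviation (paying $P_\dtype$ and being re-typed), she can subsequently change her treated type by further declining-and-relocating without any additional penalty. Her best continuation is therefore to reach treated type $k$ (her true type), from which she earns lifetime $w$ by Lemma~\ref{lem:c_forms_eq}; hence the continuation value from any post-deviation state is at most $w$. This immediately yields the truthful case $\dtype = k$: always providing service gives lifetime $w$, whereas any one-shot deviation at $i$ yields at most $I\cdot\one{i=k} - P_k + \delta w$, which is strictly below $w = W + \delta w$ because $I < W$ (from the standing assumption $I/(1-\delta) < w$) and $P_k \geq 0$.

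For the misreport case $\dtype \neq k$, I would reduce to strategies~(i) and~(ii) in two steps. First, the first-deviation \emph{destination} can be taken to be $k$: declining at $i$ and relocating to $j'$ yields one-period value $I\cdot\one{i=k} - P_\dtype\cdot\one{j'\neq\dtype} + \delta V'_{j'}$, where $V'_{j'}$ denotes the post-deviation continuation at $j'$; since $V'_k = w$ is the maximum post-deviation value while $V'_{j'} \leq w$ for all $j'$, relocating to $k$ weakly dominates any $j' \notin \{k,\dtype\}$, and relocating to $\dtype$ merely wastes income without changing the driver's type. Second, the first-deviation \emph{location} can be taken to be $\dtype$: an exchange argument replaces a first deviation at any $i \neq \dtype$ with providing service at $i$ and deferring the deviation to the next visit to $\dtype$. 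The modification earns an extra $W$ (or $W + I$ if $i = k$) of income in the current period while only postponing the penalty-$P_\dtype$ event by a random, discounted number of periods; using $W > I$ and the fact that the deferred deviation still yields continuation $w$, this modification weakly dominates the original strategy. Combining the two reductions, any optimal strategy that ever deviates is (without loss) strategy~(ii), and any optimal strategy that never deviates is strategy~(i).

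The main obstacle is the exchange argument for the first-deviation location, since the ``next visit to $\dtype$'' is a random hitting time of the type-$\dtype$ dispatching Markov chain, and one must bound the discounted income-versus-penalty trade-off uniformly over starting locations. The cleanest path is to substitute the strategy-(ii) values $\pi_i^{k\ra\dtype}$ from the linear system~\eqref{equ:misreport_total} into the driver's Bellman equation and verify, using the IC constraint~\eqref{equ:ic_constraint} and $W > I$, that providing service at every $i \neq \dtype$ attains the maximum; the decision at $\dtype$ then reduces to a binary comparison between service and decline-to-$k$, and the best response is whichever of strategies~(i) or~(ii) yields the higher stationary-weighted value.
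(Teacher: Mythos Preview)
Your plan follows the same three-stage skeleton as the paper's proof---bound the post-deviation continuation, pin the relocation destination to $k$, then restrict the deviation location to $\dtype$---but you have placed the real work in the wrong stage.

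The sentence ``Her best continuation is therefore to reach treated type $k$\ldots hence the continuation value from any post-deviation state is at most $w$'' is a non-sequitur. Being able to switch freely to type $k$ (earning $w$) shows the post-deviation value is at \emph{least} $w$ from the appropriate state; it does not give an upper bound. To get $V'_{j'}\le w$ you must rule out that being treated as some $j\neq k$ and riding the type-$j$ dispatch chain yields more than $w$. This is precisely where the paper spends its effort: it uses the IC constraint $x_{j}^{(j)}\ge x_{k}^{(j)}$ together with an ergodic argument to show that, starting at $j$, the expected number of visits to $k$ before returning to $j$ is at most one, so the $+I$ periods at $k$ never outpace the $-I$ periods at $j$. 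That argument belongs in your first paragraph, not your third.

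Conversely, the step you label the ``main obstacle''---showing that the first deviation should occur at $\dtype$ rather than elsewhere---does not require a hitting-time estimate or a Bellman verification against~\eqref{equ:misreport_total}. Once you know the post-deviation continuation is exactly $w$ (i.e., $W$ per period), the paper disposes of it by a per-period comparison: at any $i\neq\dtype$ providing service pays $\ge W$ this period and preserves the option to deviate next period, whereas deviating now forfeits that $\ge W$ (plus $P_\dtype$) in exchange for the same $W$-per-period stream. Formally, using only $V_j\ge -P_\dtype+\delta w$ (the ``deviate immediately'' lower bound) one gets $S_i-D_i\ge (W+P_\dtype)(1-\delta)\ge 0$ at every $i\neq\dtype$, with no appeal to~\eqref{equ:ic_constraint}. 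So your exchange argument and the proposed substitution of $\pi_i^{k\ra\dtype}$ are both unnecessary machinery for this step; the difficulty you anticipate evaporates once the post-deviation bound is properly established.
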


We now outline the proof of Lemma~\ref{lem:strat}. We first show that a truthful driver earns $W$ at every location, and it is always optimal for her to provide service.%
An untruthful driver gets the least utility when at her reported preferred location  $\dtype$, so relocating to $\dtype$ is worse than providing service; in any period, she should either provide service or relocate to a location $i \neq \dtype$. In fact, because she is charged the same penalty for relocating to any location $i\neq\dtype$, her optimal relocation is her true preferred location $k$. Intuitively, if she relocates to $i$, she will be at $i$ and treated as type $i$ from then on, which is sub-optimal for her unless $i = k$. Given that the optimal relocation is then her true preferred location---after which she will make $W$ every period---the only location where she might profitably not provide service is $\dtype$, her reported preferred location and the only place she currently makes less than $W$ in-period.

\subsection{Cases with First-Best Revenue and No Penalty}

Although the IC constraint \eqref{equ:ic_constraint} may reduce revenue, we can characterize some settings where imposing IC does not lead to a revenue loss: when supply is unconstrained, or when rider demand is symmetric, PARM achieves the full-information first-best revenue. Furthermore, no penalty is necessary to ensure incentive compatibility.

\begin{restatable}{theorem}{thminfinitesupply}
\label{thm:inf_supply}
Suppose $s \supt =\infty$ for all $\dtype \in \location$. Then PARM achieves full-information first-best revenue, and no penalty is necessary to ensure incentive compatibility.
\end{restatable}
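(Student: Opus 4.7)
The plan is to show that PARM's problem (adding IC \eqref{equ:ic_constraint} to \eqref{equ:rev_opt_rewritten}) has the same optimal value as the unconstrained problem \eqref{equ:rev_opt_rewritten}, which by Lemma \ref{lem:c_forms_eq} gives the full-information first-best revenue. Since PARM's feasible set is contained in the unconstrained one, it suffices to construct, from any optimal unconstrained solution, a PARM-feasible solution with identical objective value.

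First I would solve \eqref{equ:rev_opt_rewritten} without \eqref{equ:ic_constraint} to obtain an optimum $(p, f, y, x)$. With $s\supt = \infty$ for every $\dtype$, the supply constraint is slack, so only the aggregate dispatch pattern $\bar G_{ij} = \sum_\dtype (f_{ij}\supt + y_{ij}\supt)$ enters the revenue and $-W \sum x$ terms of the objective; the remaining $I \sum_\dtype x_\dtype\supt$ term is all that depends on how we decompose the aggregate across types.

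The core construction is to apply a flow-decomposition theorem to write $\bar G$ as a union of closed walks $C_1,\ldots,C_m$ with driver masses $M_1,\ldots,M_m$, and assign each cycle $C_k$ to a single type $\dtype_k$ chosen from the locations visited most frequently on $C_k$. Within each such cycle, the type-$\dtype_k$ driver mass at $\dtype_k$ weakly dominates that at any other location, so \eqref{equ:ic_constraint} holds cycle-by-cycle and therefore in aggregate. An exchange argument shows this cycle-wise argmax assignment maximizes $\sum_\dtype x_\dtype\supt$ over all type decompositions of $\bar G$: reassigning a cycle to a non-argmax type weakly decreases its contribution, and merging several cycles under a common type merely sums their per-cycle contributions. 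Hence the constructed PARM-feasible solution matches the unconstrained optimum and attains the first-best.

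For the second claim that no penalty is needed, observe that in this construction a driver of reported type $\dtype$ stays on cycles assigned to $\dtype$, so a misreporting driver of true type $k \neq \dtype$ gains no additional access to $k$ through the platform's dispatches. By Lemma \ref{lem:strat} her only possibly-beneficial deviation is to decline service at $\dtype$ and relocate to $k$; this costs one period of income $W$ without yielding any sustained idiosyncratic gain over the truthful strategy, whose lifetime utility is already $w$. Solving the linear system \eqref{equ:misreport_total} thus yields $P^{k\ra\dtype} \le 0$ for every $k$, giving $P_\dtype = \max\{\max_k P^{k\ra\dtype},\, 0\} = 0$. The main obstacle is formalizing the exchange argument for cycle-wise argmax optimality, especially when cycles share locations or several are merged under a common type; I expect this step to reduce to a vertex argument on the LP over type decompositions of the aggregate flow.
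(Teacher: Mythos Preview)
Your approach to the first claim has a genuine gap: the exchange argument you sketch does \emph{not} show that a cycle-wise argmax assignment maximizes $\sum_\dtype x_\dtype\supt$ over all type decompositions of the aggregate $\bar G$, because the cycle decomposition of $\bar G$ is not unique and different decompositions can yield different values of $\sum_k M_k$. Concretely, let $\location=\{A,B,C\}$ and take $\bar G$ to be the ``bidirectional triangle'' with unit flow on each of the six directed edges. Decomposing $\bar G$ as two length-3 cycles and assigning each to a vertex on it gives $\sum_\dtype x_\dtype\supt=2$; decomposing instead as three length-2 cycles $A\leftrightarrow B$, $B\leftrightarrow C$, $C\leftrightarrow A$ and assigning each to one endpoint gives $\sum_\dtype x_\dtype\supt=3$. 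The latter is a perfectly valid type decomposition of $\bar G$, so your argmax assignment on the first decomposition is strictly suboptimal, and your constructed $S'$ may have value strictly below the unconstrained optimum. The exchange moves you describe (reassigning a cycle, merging cycles) never change the underlying cycle decomposition and therefore cannot reach the better decomposition. The easy fix---and this is what the paper does---is to decompose the \emph{original} type-specific flows $G\supt$ from the unconstrained optimum, not the aggregate $\bar G$: any cycle in $G\supt$ that misses $\dtype$ can be reassigned to a type on the cycle, which strictly increases the objective. Hence in any optimum every simple cycle carrying positive type-$\dtype$ flow passes through $\dtype$, and \eqref{equ:ic_constraint} follows directly.

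Your no-penalty argument is also incomplete. You assert that the Lemma~\ref{lem:strat} deviation ``costs one period of income $W$ without yielding any sustained idiosyncratic gain,'' but a misreporting driver of true type $k$ reporting $\dtype$ \emph{does} collect an extra $I$ every time she visits $k$ before her first visit to $\dtype$; you must bound the expected number of such visits. The paper's argument uses exactly the structural property above: since every positive-flow type-$\dtype$ cycle passes through $\dtype$, the driver cannot return to $k$ without first visiting $\dtype$, so she collects at most one extra $I$ before losing $W\geq I$ at $\dtype$. Your construction does not secure this stronger property---if two simple cycles through $\dtype$ are both assigned to type $\dtype$, their union can contain a positive-flow simple cycle avoiding $\dtype$---so the bound does not follow from your setup as written.
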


Briefly, the IC constraint \eqref{equ:ic_constraint} does not bind because drivers cost less to the platform when at their preferred locations. If there are more drivers with reported type $\dtype$ at location $i$ than at $\dtype$, the platform can improve its revenue by replacing the type-$\dtype$ drivers with type-$i$ drivers (there are unlimited type-$i$ drivers). Incentive compatibility holds without penalties because each driver visits her reported preferred location before visiting any other location too many times. Thus, she cannot profitably use a misreport-plus-deviation to sacrifice one period of income for many periods of idiosyncratic utility (as described following Theorem~\ref{thm:parm_IC}).
Note that the preceding argument makes no assumption on the demand pattern and requires only the availability of supply.

\begin{definition} \label{defn:symmetric_demand}  Rider demand $(\theta, \alpha)$ is \textit{symmetric} if $\forall i,j,k,l \in \location$ we have $\theta_i=\theta_k$ and $\alpha_{ij}=\alpha_{kl}$.
\end{definition}

\begin{restatable}{theorem}{thmfbsymmdemand}
\label{thm:fb_symmetric_demand} 
Suppose that rider demand is symmetric. Then we can construct a solution to optimization problem \eqref{equ:rev_opt_rewritten} with incentive compatibility  constraint \eqref{equ:ic_constraint} such that PARM achieves full-information first-best revenue, and no penalty is necessary to ensure incentive compatibility.
\end{restatable}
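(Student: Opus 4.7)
My plan is to exhibit a closed-form, type-symmetric solution to problem \eqref{equ:rev_opt_rewritten} with the IC constraint \eqref{equ:ic_constraint} that attains the first-best objective, and then solve the Bellman system \eqref{equ:misreport_total} with $P=0$ to verify that no penalty is required.

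\emph{Step 1 (symmetric first-best; star decomposition).} Because demand is location-symmetric, \eqref{equ:rev_opt_rewritten} is $S_n$-invariant under relabeling locations, so there is a first-best optimum with uniform price $p_{ij}=p^*$ (the per-route revenue function is the same for every route); for this fixed $p^*$ the restricted problem in $(f,x,y)$ is a linear program whose $S_n$-invariant feasible set and objective admit a fully symmetric optimum with $\sum_\dtype x_i\supt=S$ uniform in $i$ and $\sum_\dtype f_{ij}\supt=R\triangleq\theta\alpha(1-F(p^*))=S/n$ uniform in $(i,j)$. I then decompose this aggregate into a type-$\dtype$ ``star flow'' centered at $\dtype$: set $y_{ij}\supt=0$, $f_{\dtype\dtype}\supt=R$, $f_{\dtype j}\supt=f_{j\dtype}\supt=R/2$ for $j\neq\dtype$, and $f_{jk}\supt=0$ when neither endpoint is $\dtype$. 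Flow balance yields $x_\dtype\supt=R(n+1)/2$ and $x_i\supt=R/2$ for $i\neq\dtype$, so \eqref{equ:ic_constraint} holds strictly and the type-level flows sum to $R$ on every route. This is first-best: parametrizing a fully symmetric per-type decomposition by cases $(A,B,C,D)$ (both, first, second, neither of $\{j,k\}$ equal to $\dtype$), the aggregate and flow-balance relations give $\sum_\dtype x_\dtype\supt=nR(n+1)/2-n^2(n-1)D/2$, which is maximized at $D=0$, making the idiosyncratic term $I\sum_\dtype x_\dtype\supt$ of the objective as large as possible among symmetric decompositions.

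\emph{Step 2 (zero-penalty verification).} Under the construction, the Markov chain on the location of a driver reported as type $\dtype$ is a star at $\dtype$: from $\dtype$, stay with probability $2/(n+1)$ and move to each $i\neq\dtype$ with probability $1/(n+1)$; from any $i\neq\dtype$, move to $\dtype$ with probability $1$. Its stationary distribution is $\pi_\dtype=(n+1)/(2n)$ and $\pi_i=1/(2n)$ for $i\neq\dtype$, matching $x_i\supt/x\supt$. By Lemma~\ref{lem:strat}, the only deviation to rule out for a driver of true type $k\neq\dtype$ is ``serve everywhere except $\dtype$ and relocate to $k$ at $\dtype$.'' Solving \eqref{equ:misreport_total} with $P^{k\ra\dtype}=0$ yields $\pi_\dtype^{k\ra\dtype}=\delta w$, $\pi_k^{k\ra\dtype}=W+I+\delta^2 w$, and $\pi_i^{k\ra\dtype}=W+\delta^2 w$ for $i\neq k,\dtype$. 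Averaging against the stationary distribution and using $W=w(1-\delta)$ reduces ``deviation utility $\leq w$'' to $I\leq w\bigl[(n+1)-2\delta-(n-1)\delta^2\bigr]$, and since the bracket strictly exceeds $1-\delta$ on $(0,1)$, the standing assumption $I<w(1-\delta)$ implies this with strict slack. Hence $P^{k\ra\dtype}\leq 0$ for every $k$, so PARM sets $P_\dtype=0$.

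\emph{Main obstacle.} The delicate part is Step~1's optimality claim for the star (within symmetric decompositions) together with invoking the $S_n$-symmetry of the aggregate LP to reduce general decompositions to the symmetric case. For asymmetric supplies $s\supt$ binding at different rates, the construction generalizes to per-type scalings $a\supt,b\supt$ that respect the location-wise aggregate constraint $a^{(i)}+\sum_{\dtype\neq i}b\supt=S$, and the zero-penalty computation in Step~2 carries over verbatim with these per-type quantities.
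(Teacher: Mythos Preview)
Your approach is genuinely different from the paper's and is appealing where it applies, but it has a real gap: the $S_n$-invariance you invoke in Step~1 requires that the \emph{supply} profile $\{s\supt\}_{\dtype\in\location}$ be permutation-invariant as well, and Theorem~\ref{thm:fb_symmetric_demand} assumes only that \emph{demand} is symmetric. With asymmetric supply the optimization \eqref{equ:rev_opt_rewritten} is not $S_n$-invariant, and in fact the first-best price need not be uniform. For a concrete counterexample take $n=2$, $s^{(0)}=0$, $s^{(1)}$ large, and symmetric demand; writing $g(p)=\theta\alpha(1-F(p))$, the objective decomposes as $(p_{00}-W)g(p_{00})+(p_{11}-W+I)g(p_{11})+2(p_{01}-W+I/2)g(p_{01})$, which is optimized at three distinct prices. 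Your star construction is moreover rigid: for $n\geq 3$, market-clearing on off-diagonal routes forces $b^{(i)}+b^{(j)}=R$ for all $i\neq j$, hence $b\supt\equiv R/2$, leaving no freedom to accommodate a binding supply constraint on a single type. So the ``per-type scalings'' remark at the end cannot rescue the argument. Step~2 is correct as a computation on the star chain, but it too only covers the symmetric-supply case.

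The paper takes a different route that sidesteps this issue. Rather than exhibiting an explicit decomposition, it proves two exchange lemmas (Lemma~\ref{lem:self_fill} and Lemma~\ref{lem:bilateral}) showing that \emph{any} optimal solution can be made bilateral with $f_{ii}\supt\leq f_{\dtype\dtype}\supt$, and then argues by contradiction (swapping type-$\dtype$ drivers on $i\leftrightarrow j$ with type-$k$ drivers on $\dtype\leftrightarrow j$, or shifting prices) that $x_i\supt\leq x_\dtype\supt$ must already hold at the optimum. These swaps are local and supply-agnostic, so the argument covers arbitrary $\{s\supt\}$. The no-penalty part is likewise handled qualitatively: in the bilateral solution a reported type-$\dtype$ driver is, from every location, at least as likely to be sent to $\dtype$ as to any other $j$, so the Lemma~\ref{lem:strat} deviation cannot accumulate more $+I$ periods than $-W$ periods in expectation. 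Your explicit Bellman calculation is cleaner when the star applies, but the paper's argument is what actually establishes the theorem in full generality.
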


\noindent{}To understand Theorem~\ref{thm:fb_symmetric_demand}, we prove two additional lemmas.

\begin{restatable}{lemma}{lemmaselffill}
\label{lem:self_fill}
With symmetric demand, any optimal solution to \eqref{equ:rev_opt_rewritten} satisfies $f_{ii}\supt + y_{ii}\supt\leq f_{\dtype\dtype}\supt+y_{\dtype\dtype}\supt$ for all $i,\dtype \in \location$.
\end{restatable}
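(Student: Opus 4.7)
The plan is to prove the lemma by contradiction via an exchange argument: I suppose some optimal solution $(p,f,x,y)$ to \eqref{equ:rev_opt_rewritten} violates the claim, so $f_{ii}\supt + y_{ii}\supt > f_{\dtype\dtype}\supt + y_{\dtype\dtype}\supt$ for some $i \neq \dtype$, and then I build a small feasible perturbation that strictly increases the objective, yielding a contradiction.

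The first step shows that $y_{ii}\supt = 0$ in any optimum whenever $i \neq \dtype$ (a sub-claim that in fact does not need the symmetric-demand hypothesis). If $y_{ii}\supt > 0$, I perturb by decreasing both $y_{ii}\supt$ and $x_i\supt$ by $\epsilon > 0$ and increasing both $y_{\dtype\dtype}\supt$ and $x_\dtype\supt$ by $\epsilon$. This migrates a small mass of type-$\dtype$ drivers from a \emph{wasteful} self-loop at $i$ to a \emph{valuable} one at $\dtype$, where they earn the location bonus $I$. Flow balance at $i$ and $\dtype$, the per-type supply bound, market clearing (since $f$ is unchanged), and non-negativity are all preserved; revenue and $W\sum x$ are unchanged; and $I\sum_\dtype x_\dtype\supt$ strictly increases by $I\epsilon$. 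The lemma thus reduces to showing $f_{ii}\supt \leq f_{\dtype\dtype}\supt + y_{\dtype\dtype}\supt$ for any $i \neq \dtype$.

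The second step handles the remaining inequality via a four-way swap across two types and two locations. Assuming $f_{ii}\supt > f_{\dtype\dtype}\supt + y_{\dtype\dtype}\supt$, in the generic case there exists some type $\dtype' \neq \dtype$ with $f_{\dtype\dtype}^{(\dtype')} > 0$ (equivalently, $f_{\dtype\dtype}\supt < \theta_\dtype\alpha_{\dtype\dtype}(1-F(p_{\dtype\dtype}))$, by market clearing at $(\dtype, \dtype)$). I then decrease $f_{ii}\supt$ and $f_{\dtype\dtype}^{(\dtype')}$ by $\epsilon$ and increase $f_{ii}^{(\dtype')}$ and $f_{\dtype\dtype}\supt$ by $\epsilon$, with the symmetric $x$-updates: $x_i\supt$ and $x_\dtype^{(\dtype')}$ decrease while $x_\dtype\supt$ and $x_i^{(\dtype')}$ increase, all by $\epsilon$. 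Each type's total supply is preserved, flow balance holds for both types at both locations, and market clearing at $(i,i)$ and $(\dtype, \dtype)$ is maintained (the pairs of $f$-changes cancel). Revenue and $W\sum x$ are unchanged, while $I \sum_\dtype x_\dtype\supt$ gains at least $I\epsilon$ from the increase in $x_\dtype\supt$ (and an additional $I\epsilon$ if $\dtype' = i$), contradicting optimality.

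The main obstacle is the boundary case $f_{\dtype\dtype}\supt = \theta_\dtype\alpha_{\dtype\dtype}(1-F(p_{\dtype\dtype}))$, where type $\dtype$ already serves the entire $(\dtype, \dtype)$ demand and so no compensating $\dtype'$ is immediately available. Under symmetric demand, the violated inequality then forces the total $(i,i)$ demand to strictly exceed the total $(\dtype,\dtype)$ demand and hence $p_{ii} < p_{\dtype\dtype}$. Resolving this case will require using $y_{\dtype\dtype}\supt$ on the increase side of the swap together with a carefully chosen compensating change elsewhere (for example, leveraging slack in $\dtype'$'s supply constraint or a cycle-type adjustment that avoids decreasing any $x_{\dtype'}^{(\dtype')}$), with all non-negativity and supply constraints verified throughout. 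The symmetric-demand assumption plays its crucial role precisely here: the balanced per-location demand ensures that a feasible ``return leg'' for the swap always exists at the optimum.
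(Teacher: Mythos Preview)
Your first two steps are correct and parallel the paper's: self-loops $y_{jj}^{(\tau)}$ vanish at the optimum (the paper does this by simply \emph{deleting} the idle mass, saving $W-I\cdot\one{j=\tau}>0$ per unit, which also yields $y_{\tau\tau}^{(\tau)}=0$), and the four-way flow swap handles the case where some $\tau'\neq\tau$ still serves $(\tau,\tau)$ demand. The genuine gap is the boundary case, which you explicitly leave open. The fix you sketch---parking the displaced type-$\tau$ mass in $y_{\tau\tau}^{(\tau)}$ and backfilling $(i,i)$ via another type's supply slack or a cycle---does not work. Once $\epsilon$ is shifted from $f_{ii}^{(\tau)}$ to $y_{\tau\tau}^{(\tau)}$, market clearing at $(i,i)$ forces $\epsilon$ of new service there; with all idle self-loops already zero, the compensating $\tau'$-mass either comes from fresh supply (adding $W\epsilon$ of driver cost against a gain of at most $2I\epsilon$, not necessarily positive since only $I<W$ is assumed) or from some $f_{jk}^{(\tau')}$, which merely relocates the market-clearing violation to $(j,k)$. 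A pure flow exchange cannot escape this accounting.

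The paper closes the boundary case by perturbing \emph{prices}, not flows: raise $p_{ii}$ and lower $p_{\tau\tau}$ by the same infinitesimal amount, shifting the corresponding demand mass from $f_{ii}^{(\tau)}$ into $f_{\tau\tau}^{(\tau)}$. The inequality $p_{ii}<p_{\tau\tau}$ you already derived, together with symmetry $\theta_i\alpha_{ii}=\theta_\tau\alpha_{\tau\tau}$, makes the revenue term of \eqref{equ:rev_opt_rewritten} strictly increase under this swap (concavity of $p\mapsto \theta\alpha\,p(1-F(p))$); total driver mass is unchanged and $x_\tau^{(\tau)}$ rises, so the third term increases as well. This price perturbation is the missing idea your proposal needs.
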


Intuitively, type-$\dtype$ drivers cost less when at location $\dtype$, so it is optimal for the marginal ride they give at location $\dtype$ to have a lower price than at other locations.
If demand is symmetric, this means  drivers with reported type-$\dtype$ provide more rides at location $\dtype$ than at any other location.

\begin{restatable}{lemma}{lemmabilateral}
\label{lem:bilateral}
If the demand pattern is symmetric, we can construct an optimal solution to \eqref{equ:rev_opt_rewritten} such that for all $i,j\in \location$ and all $\dtype \in \location$,  $f_{ij}\supt=f_{ji}\supt$ and $y_{ij}\supt=y_{ji}\supt=0$.
\end{restatable}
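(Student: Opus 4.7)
The plan is to take any optimal solution $(p,f,x,y)$ of~\eqref{equ:rev_opt_rewritten} and \emph{symmetrize} it into a solution $(\tilde p,\tilde f,\tilde x,\tilde y)$ with the desired bilateral structure, then argue that the symmetrization is feasible and weakly improves the objective, so it too must be optimal. Concretely, I would set
\[
  \tilde f_{ij}\supt = \tfrac12\bigl(f_{ij}\supt + f_{ji}\supt\bigr),\qquad \tilde y_{ij}\supt = 0,\qquad \tilde x_i\supt = \sum_{j\in\location} \tilde f_{ij}\supt,
\]
and pick $\tilde p_{ij}$ so that $\sum_\tau \tilde f_{ij}\supt = \theta_i\alpha_{ij}(1-F(\tilde p_{ij}))$. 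Symmetric demand makes $\theta_i\alpha_{ij}$ a single constant $D$, which both makes the averaging move compatible with the demand constraint and guarantees $\tilde p_{ij} = \tilde p_{ji}$.

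The next step is routine feasibility checking. Bilateral symmetry $\tilde f_{ij}\supt = \tilde f_{ji}\supt$ forces $\sum_j \tilde f_{ji}\supt = \sum_j \tilde f_{ij}\supt = \tilde x_i\supt$, so with $\tilde y\equiv 0$ both the flow-balance constraint $\tilde x_i\supt = \sum_j \tilde f_{ji}\supt + \sum_j \tilde y_{ji}\supt$ and the relocation identity $\sum_j \tilde y_{ij}\supt = \tilde x_i\supt - \sum_j \tilde f_{ij}\supt$ hold. The supply constraint follows from
\[
  \sum_i \tilde x_i\supt = \sum_{i,j} \tilde f_{ij}\supt = \sum_{i,j} f_{ij}\supt \leq \sum_{i,j}\bigl(f_{ij}\supt + y_{ij}\supt\bigr) = \sum_i x_i\supt \leq s\supt,
\]
and non-negativity is immediate.

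Then I would compare the objective term by term, using the paper's uniform-$F$ specialization so that $R(q) := q(1 - q/D)$ is the per-pair trip-revenue function. Writing $F_{ij} := \sum_\tau f_{ij}\supt$ and $\tilde F_{ij} := (F_{ij} + F_{ji})/2$, concavity of $R$ gives $R(\tilde F_{ij}) \geq \tfrac12(R(F_{ij}) + R(F_{ji}))$, and summing over all ordered pairs yields $\sum_{i,j} R(\tilde F_{ij}) \geq \sum_{i,j} R(F_{ij})$, so trip revenue weakly rises. The driver-payment term $-W\sum_{i,\tau} x_i\supt$ improves by exactly $W\sum_{i,j,\tau} y_{ij}\supt$, because eliminating relocations reduces each $\sum_i x_i\supt$ by $\sum_{i,j} y_{ij}\supt$. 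For the preferred-location bonus $I\sum_\tau x_\tau\supt$, the flow-balance identity $\sum_j f_{\tau j}\supt - \sum_j f_{j\tau}\supt = \sum_j y_{j\tau}\supt - \sum_j y_{\tau j}\supt$ lets me rewrite the change as $-\tfrac{I}{2}\sum_\tau\sum_j(y_{\tau j}\supt + y_{j\tau}\supt)$, whose magnitude is at most $I\sum_{i,j,\tau} y_{ij}\supt$. Since the primitive assumption $I/(1-\delta) < w$ gives $I < W$, the $W$-savings dominate this $I$-loss, so the objective weakly increases overall; optimality of the original $(p,f,x,y)$ then forces equality, and the symmetrized solution is also optimal.

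The hard part is this last accounting step---killing relocations also kills some preferred-location bonus whenever a type-$\tau$ relocation was landing at or leaving $\tau$, and one must show the lost $I$-bonus is bounded above by the $W$-term savings. The bookkeeping above---rewriting the $I$-change purely in terms of relocation flows via flow balance, and then invoking $W > I$---is exactly what makes the symmetrization safe. Everything else is routine: symmetric demand makes the averaging compatible with the demand constraint and lets the per-pair trip-revenue function depend only on the total flow $F_{ij}$; bilateral symmetry collapses the flow and relocation constraints simultaneously; and Jensen on the concave $R$ (clean under uniform $F$) handles the trip-revenue comparison.
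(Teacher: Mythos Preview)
Your proof is correct and follows the same high-level strategy as the paper---symmetrize an optimal solution and show the result is feasible with weakly higher objective---but the execution differs in a way worth noting. The paper symmetrizes both $f$ and $y$ by averaging (so $\tilde y_{ij}\supt = \tfrac12(y_{ij}\supt+y_{ji}\supt)$), which leaves every $x_i\supt$ unchanged and hence reduces the comparison to the trip-revenue term alone; it handles that term with a direct first-order calculation (equivalent to your Jensen step) and only afterwards disposes of the now-bilateral relocation flows with a one-line ``drivers flowing back and forth without giving rides is sub-optimal'' remark. You instead collapse the two moves into one by setting $\tilde y\equiv 0$ immediately, at the cost of tracking the change in the $I$-bonus term and invoking $W>I$ explicitly. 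That extra bookkeeping is precisely what the paper's final remark glosses over---removing bilateral $y$-flow that touches location $\tau$ does cost $I$-bonus, and one genuinely needs $W>I$ to conclude the net effect is an improvement---so your version is actually a bit more careful on this point, while the paper's two-step decomposition keeps the revenue comparison cleaner by isolating it from the $x$-dependent terms.
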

Intuitively, there is no need for drivers to relocate when demand is fully symmetric. Moreover, given any optimal solution to \eqref{equ:rev_opt_rewritten}, we can construct an alternative optimal solution, where the flow of drivers of each type can be decomposed as cycles with length $2$, i.e., $f_{ij}\supt=f_{ji}\supt$.

\smallskip

We can now sketch the proof of Theorem~\ref{thm:fb_symmetric_demand}. 
With symmetric demand, Lemma~\ref{lem:self_fill} implies driver flow for within-location trips satisfies the IC constraint \eqref{equ:ic_constraint}. For all inter-location trips, Lemma~\ref{lem:bilateral} lets us focus only on bilateral driver flow between pairs of locations $i$ and $j$. Type-$\dtype$ drivers cost less at $\dtype$, so they will naturally fill more rides between $\dtype$ and $j$ than between $i$ and $j$---and this holds for all $j$, so type-$\dtype$ drivers fill more rides in and out of $\dtype$ than $i$. Combining the two cases, drivers of type $\dtype$ do not spend more time at another location $i\neq \dtype$ than they do at $\dtype$, so imposing the IC constraint \eqref{equ:ic_constraint} is without loss of revenue. As in Theorem~\ref{thm:inf_supply}, incentive compatibility holds without penalties because each driver will visit her reported preferred location before visiting any other too many times. Thus, she cannot profitably use a misreport-plus-deviation to sacrifice one period of income for many periods of idiosyncratic utility (as described following Theorem~\ref{thm:parm_IC}). 
\section{Simulation Results} \label{sec:simulations}
In this section, we use simulations to analyze the revenue and social welfare under PARM for settings outside the cases covered by Theorems~\ref{thm:inf_supply} and~\ref{thm:fb_symmetric_demand}---i.e., settings with limited  supply and unbalanced demand. 

Social welfare is defined as the total rider value plus drivers' utilities from being in their preferred locations, minus the total opportunity costs incurred by drivers.
We compare PARM with the full-information {\em first-best}, and also a {\em Preference-Oblivious Ridesharing Mechanism} (\emph{PORM}) which sets prices as in \citet{Bimpikis2016} without considering drivers' location preferences, while assuming that drivers always follow  dispatches. 
In Section~\ref{sec:simulation_equilibrium}, we also study the equilibrium outcome under PORM, allowing driver autonomy.
For ease of illustration, we consider two locations $\location = \{0, 1\}$ throughout the analysis.

\newcommand{\subfigSize}{0.235}

\subsection{Varying Demand Patterns} 

Suppose that there are an equal number of drivers favoring each location: $s^{(0)} \hspace{-0.2em} = \hspace{-0.2em}  s^{(1)} \hspace{-0.2em}  = \hspace{-0.2em} 100$. Drivers have outside option $w \hspace{-0.2em} = \hspace{-0.2em} 40$, discount factor $\delta \hspace{-0.1em} = \hspace{-0.1em} 0.99$, and gain utility $I =\hspace{-0.1em} 0.2W \hspace{-0.1em} = 0.2w(1-\delta)$ per period from being in their preferred locations. Each rider has  value independently drawn  $\sim \mathrm{U}[0,1]$. 
\paragraph{Varying Total Demand.}
\newcommand{\subfigSizeedit}{\footnotesize} 

We first assume an unbalanced trip flow $\alpha_{00} \hspace{-0.2em}= \hspace{-0.2em}\alpha_{10}=0.25$ and $\alpha_{01} \hspace{-0.2em}= \hspace{-0.2em}\alpha_{11}=0.75$ (i.e., three quarters of riders from each location would like to go to location $1$). Fixing the total demand at location $1$ at $\theta_1\hspace{-0.1em} = \hspace{-0.1em}1000$, and varying $\theta_0$ from $0$ to $1000$, the revenue and welfare under PARM and benchmarks are as in Figure~\ref{fig:varying_theta}.
Although PARM only necessarily achieves first-best revenue when $\theta_0 \hspace{-0.1em}= \hspace{-0.1em}1000$ (symmetric demand), we see that PARM achieves the first-best and outperforms PORM unless $\theta_0$ is very small, such that demand from the two locations is highly asymmetric.
\begin{figure}[t!]
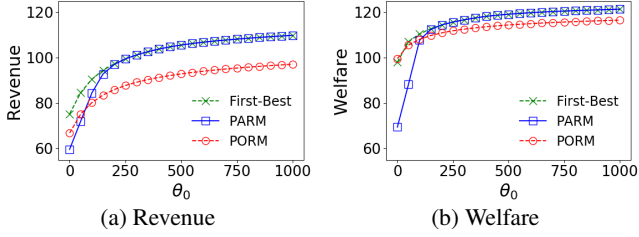

    \centering
    \begin{subfigure}[b]{\subfigSize\textwidth}
        \centering
        \includegraphics[width=\textwidth]{eff_incr/profitTheta_0.png}
        \vspace{-1.5\baselineskip}
        \caption[eqm_rev_i]%
        {{\small Revenue}} 
        \label{subfig_rev_theta0}
    \end{subfigure}
    ~
    \begin{subfigure}[b]{\subfigSize\textwidth}  
        \centering 
        \includegraphics[width=\textwidth]{eff_incr/welfareTheta_0.png}
        \vspace{-1.5\baselineskip}
        \caption[eqm_welf_i]%
        {{\small Welfare}}    
        \label{subfig_welfare_theta0}
    \end{subfigure}
    \vspace{-1.5\baselineskip}
    \caption[]%
    {{ Revenue and welfare varying demand $\theta_0$ at location~0.}}
    \label{fig:varying_theta}
\end{figure}

When $\theta_0 \ll \theta_1$, almost all rides originate and terminate at location~$1$, thus the first-best and  PORM dispatch most drivers of both types to provide service at location~$1$.  Figure~\ref{fig:vary_theta_snap_theta_50} illustrates the rider trip flows fulfilled by drivers of each type under different mechanisms, when $\theta_0 = 50$.
To satisfy PARM's incentive compatibility (IC) constraint, however, drivers of type $0$ must spend a plurality of their time at location 0. Therefore, PARM completes fewer trips at location~$1$, dispatches more type~$0$ drivers to fulfill (the less profitable) between-location trips, and asks many type~$0$ drivers to relocate back to~$1$ once they arrive at location $0$ (the numbers after the ``+" sign represent driver relocation flow), resulting in lower revenue and social welfare. 

\newcommand{\flowFigWidthMult}{0.15}
\newcommand{\flowFigFontSize}{\footnotesize} 
\newcommand{\flowFigLabelAbove}{5pt}
\newcommand{\flowFigCrlSize}{7mm}
\newcommand{\flowFigArcParam}{-45:225:3.5mm}
\newcommand{\flowFigCircleGap}{20pt}
\newcommand{\flowFigArrowSize}{3pt}

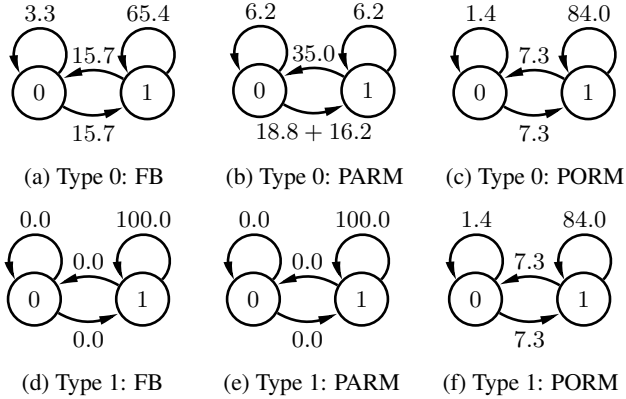
\begin{figure}[t!]
    \centering
    \begin{subfigure}[b]{\flowFigWidthMult \textwidth}
        \centering
        \begin{tikzpicture}
            [ font = \flowFigFontSize,
              >/.tip={Triangle[length=7.5pt,width=\flowFigArrowSize,bend]},
              line width=1pt,
              my circle/.style={minimum width=\flowFigCrlSize, circle, draw},
              my label/.style={above=\flowFigLabelAbove, anchor=mid}
            ]
            \node (0) [my circle] {$0$};
            \node (1) [my circle, right= \flowFigCircleGap of 0] {$1$};
            
            \path (0) [->, draw] edge[bend right]  node[below]{$15.7$}   (1);
            \path (1) [->, draw] edge[bend right]  node[my label]{$15.7$}   (0);
            \path [->, draw] (1.north east) arc (\flowFigArcParam) node [midway, my label] {$65.4$};
            \path [->, draw] (0.north east) arc (\flowFigArcParam) node [midway, my label] {$3.3$};
        \end{tikzpicture}
        \caption[Network2]%
        {{\small Type 0: FB}} 
        \label{fig:vary_theta_snap_theta_50_type0_fb}
    \end{subfigure}
    ~
    \begin{subfigure}[b]{\flowFigWidthMult \textwidth}  
        \centering 
        \begin{tikzpicture}
            [ font = \flowFigFontSize,
              >/.tip={Triangle[length=7.5pt,width=\flowFigArrowSize,bend]},
              line width=1pt,
              my circle/.style={minimum width=\flowFigCrlSize, circle, draw},
              my label/.style={above= \flowFigLabelAbove, anchor=mid}
            ]
            \node (0) [my circle] {$0$};
            \node (1) [my circle, right= \flowFigCircleGap of 0] {$1$};
            
            \path (0) [->, draw] edge[bend right]  node[below]{$18.8+16.2$}   (1);
            \path (1) [->, draw] edge[bend right]  node[my label]{$35.0$}   (0);
            \path [->, draw] (1.north east) arc (\flowFigArcParam) node [midway, my label] {$6.2$};
            \path [->, draw] (0.north east) arc (\flowFigArcParam) node [midway, my label] {$6.2$};
        \end{tikzpicture}
        \caption[]%
        {{\small Type 0: PARM}}    
        \label{fig:vary_theta_snap_theta_50_type0_parm}
    \end{subfigure}
    ~
    \begin{subfigure}[b]{\flowFigWidthMult \textwidth}
        \centering
        \begin{tikzpicture}
            [font = \flowFigFontSize,
              >/.tip={Triangle[length=7.5pt,width=3pt,bend]},
              line width=1pt,
              my circle/.style={minimum width= \flowFigCrlSize, circle, draw},
              my label/.style={above= \flowFigLabelAbove, anchor=mid}
            ]
            \node (0) [my circle] {$0$};
            \node (1) [my circle, right= \flowFigCircleGap of 0] {$1$};
            
            \path (0) [->, draw] edge[bend right]  node[below]{$7.3$}   (1);
            \path (1) [->, draw] edge[bend right]  node[my label]{$7.3$}   (0);
            \path [->, draw] (1.north east) arc (\flowFigArcParam) node [midway, my label] {$84.0$};
            \path [->, draw] (0.north east) arc (\flowFigArcParam) node [midway, my label] {$1.4$};
        \end{tikzpicture}
        \caption[Network2]%
        {{\small Type 0: PORM}}    
        \label{fig:vary_theta_snap_theta_50_type0_porm}
    \end{subfigure}
    \vskip0.3\baselineskip
    \begin{subfigure}[b]{\flowFigWidthMult \textwidth}   
        \centering 
        \begin{tikzpicture}
            [ font = \flowFigFontSize,
              >/.tip={Triangle[length=7.5pt,width=\flowFigArrowSize,bend]},
              line width=1pt,
              my circle/.style={minimum width=\flowFigCrlSize, circle, draw},
              my label/.style={above = \flowFigLabelAbove, anchor=mid}
            ]
            \node (0) [my circle] {$0$};
            \node (1) [my circle, right= \flowFigCircleGap of 0] {$1$};
            
            \path (0) [->, draw] edge[bend right]  node[below]{$0.0$}   (1);
            \path (1) [->, draw] edge[bend right]  node[my label]{$0.0$}   (0);
            \path [->, draw] (1.north east) arc (\flowFigArcParam) node [midway, my label] {$100.0$};
            \path [->, draw] (0.north east) arc (\flowFigArcParam) node [midway, my label] {$0.0$}; 
        \end{tikzpicture}
        \caption[]%
        {{\small Type 1: FB}}
        \label{fig:vary_theta_snap_theta_50_type1_fb}
    \end{subfigure}
    ~
    \begin{subfigure}[b]{ \flowFigWidthMult \textwidth}   
        \centering 
        \begin{tikzpicture}
            [ font = \flowFigFontSize,
              >/.tip={Triangle[length=7.5pt,width=\flowFigArrowSize,bend]},
              line width=1pt,
              my circle/.style={minimum width=\flowFigCrlSize, circle, draw},
              my label/.style={above= \flowFigLabelAbove, anchor=mid}
            ]
            \node (0) [my circle] {$0$};
            \node (1) [my circle, right= \flowFigCircleGap of 0] {$1$};
            
            \path (0) [->, draw] edge[bend right]  node[below]{$0.0$}   (1);
            \path (1) [->, draw] edge[bend right]  node[my label]{$0.0$}   (0);
            \path [->, draw] (1.north east) arc (\flowFigArcParam) node [midway, my label] {$100.0$};
            \path [->, draw] (0.north east) arc (\flowFigArcParam) node [midway, my label] {$0.0$}; 
        \end{tikzpicture}
        \caption[]%
        {{\small Type 1: PARM}}    
        \label{fig:vary_theta_snap_theta_50_type1_parm}
    \end{subfigure}
    ~
    \begin{subfigure}[b]{\flowFigWidthMult \textwidth}   
        \centering 
        \begin{tikzpicture}
            [ font = \flowFigFontSize,
              >/.tip={Triangle[length=7.5pt,width=\flowFigArrowSize,bend]},
              line width=1pt,
              my circle/.style={minimum width = \flowFigCrlSize, circle, draw},
              my label/.style={above = \flowFigLabelAbove, anchor=mid}
            ]
            \node (0) [my circle] {$0$};
            \node (1) [my circle, right= \flowFigCircleGap of 0] {$1$};
            
            \path (0) [->, draw] edge[bend right]  node[below]{$7.3$}   (1);
            \path (1) [->, draw] edge[bend right]  node[my label]{$7.3$}   (0);
            \path [->, draw] (1.north east) arc (\flowFigArcParam) node [midway, my label] {$84.0$};
            \path [->, draw] (0.north east) arc (\flowFigArcParam) node [midway, my label] {$1.4$};
        \end{tikzpicture}
        \caption[]%
        {{\small Type 1: PORM}}    
        \label{fig:vary_theta_snap_theta_50_type1_porm}
    \end{subfigure}   
    \vspace{-0.5em}
    \caption[Flow]
    {Total rider trips fulfilled by drivers of each type,
    with $(\alpha_{i0}, \alpha_{i1})=(0.25,~0.75)$, $\theta=(50,~1000)$, and $s=(100,100)$.
    } 
    \label{fig:vary_theta_snap_theta_50}
\end{figure}

\paragraph{Varying Imbalance in Demand.} 

\begin{figure}[t!]
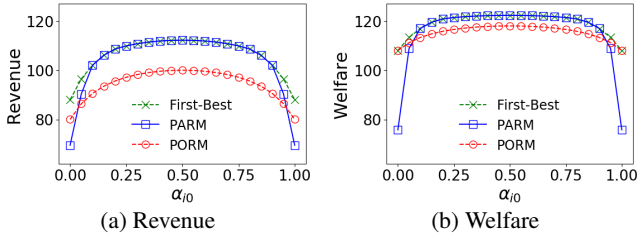

    \centering
    \begin{subfigure}[b]{\subfigSize\textwidth}
        \centering
        \includegraphics[width=\textwidth]{eff_incr/profitalpha_i0.png}
        \vspace{-1.5\baselineskip}
        \caption[eqm_rev_i]%
        {{\small Revenue}} 
        \label{fig:varying_alpha_revenue}
    \end{subfigure}
    ~
    \begin{subfigure}[b]{\subfigSize\textwidth}  
        \centering 
        \includegraphics[width=\textwidth]{eff_incr/welfarealpha_i0.png}
        \vspace{-1.5\baselineskip}
        \caption[eqm_welf_i]%
        {{\small Welfare}}    
        \label{fig:varying_alpha_welfare}
    \end{subfigure}
    \vspace{-1.5\baselineskip}
    \caption[]%
    {Revenue and welfare varying $\alpha_{i0}$ for $i=0,1$. \label{fig:varying_alpha}} 
\end{figure}

Fixing $\theta_0=\theta_1=1000$ and varying $\alpha_{i0}$ for $i =0,1$ (i.e., changing the proportion of rides with destination~$0$), the revenue and welfare achieved by different mechanisms are shown in Figure~\ref{fig:varying_alpha}.  
Similar to Figure~\ref{fig:varying_theta}, PARM achieves first-best revenue and outperforms PORM for a wide range of $\alpha_{i0}$ (although demand is only symmetric when $\alpha_{i0} = 0.5$).
For similar reasons as in the above scenario, we see a decline of revenue and welfare under PARM when demand becomes highly unbalanced---in this case, when $\alpha_{i0}$ approaches $0$ or $1$ and almost all riders have the same destination.

\subsection{PORM in Equilibrium} \label{sec:simulation_equilibrium}

In this section, we analyze a scenario for which we are able
to compute the equilibrium outcome given the pricing under
PORM, and under the setting where drivers are given the flexibility to decide how to drive.
Consider two locations $\location \hspace{-0.1em}=\hspace{-0.1em} \{0,1\}$ and drivers of type~$1$ only: $s^{(0)}\hspace{-0.1em}=\hspace{-0.1em}0,~s^{(1)} \hspace{-0.1em}=\hspace{-0.1em}200$. All trips start and end in the same location, i.e., $\alpha_{00} \hspace{-0.1em}=\hspace{-0.1em} \alpha_{11}\hspace{-0.1em} = \hspace{-0.1em}1$. 
Being oblivious to drivers' preferences, PORM sets the same trip price for the two locations and expects the spatial distribution
of drivers to be proportional to the distribution of demand. In equilibrium, however, more drivers decide to drive in location~$1$ (the preferred location), such that in each period drivers in~$1$ are dispatched with probability less than $1$ and achieve the same expected utility as drivers in~$0$.

\paragraph{Varying Location Preference $I$.}
In Figure~\ref{fig:varying_I_eqm}, we fix demand $\theta_0=\theta_1 = 1000$ and plot revenue and welfare as $I$, the idiosyncratic driver utility, varies from $0$ to $W$. 
As $I$ increases, welfare and revenue under PARM coincide with the first-best and increase as expected. However, revenue under PORM (assuming driver compliance) remains constant since the mechanism is oblivious to drivers' preferences.
We also see a decrease in welfare and revenue achieved in equilibrium under PORM, since more drivers decide to supply in location~$1$, instead of in location~$0$ as dispatched, resulting in unfulfilled rides in~0 and idle drivers in~1. Beyond $I=0.5W$, revenue and welfare remain constant, since all drivers are already supplying location~$1$.

\begin{figure}[t!]
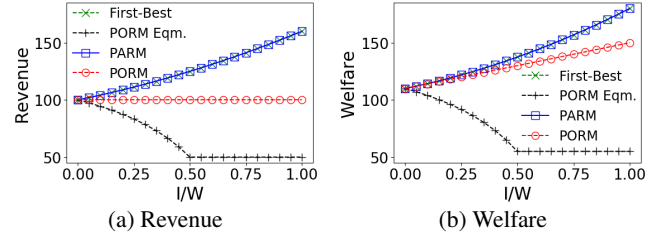

    \centering
    \begin{subfigure}[b]{\subfigSize\textwidth}
        \centering
        \includegraphics[width=\textwidth]{b_adverse/profiti.png}
        \vspace{-1.5\baselineskip}
        \caption[eqm_rev_i]%
        {{\small Revenue}} 
        \label{fig:varying_I_eqm_revenue}
    \end{subfigure}
    ~
    \begin{subfigure}[b]{\subfigSize\textwidth}  
        \centering 
        \includegraphics[width=\textwidth]{b_adverse/welfarei.png}
        \vspace{-1.5\baselineskip}
        \caption[eqm_welf_i]%
        {{\small Welfare}}    
        \label{fig:varying_I_eqm_welfare}
    \end{subfigure}
    \vspace{-1.5\baselineskip}
    \caption[]%
    {Equilibrium revenue and welfare varying $I/W$.} 
    \label{fig:varying_I_eqm}
\end{figure}

Figure~\ref{fig:varying_I_eqm_flow} illustrates rider trip flow fulfilled by the type~$1$ drivers when $I = 0.2W$. PARM assigns more drivers to location 1 than location 0, but PORM does not. However, in equilibrium more drivers end up at location 1 anyway, leading to 25 units of drivers idling at location~$1$.

\newcommand{\flowFigSideWidth}{0.135}
\newcommand{\flowFigMidWidth}{0.18}
\renewcommand{\flowFigCircleGap}{18pt}
\renewcommand{\flowFigCrlSize}{6.5mm}   

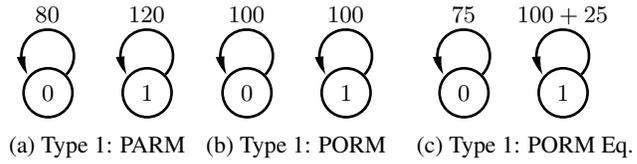
\begin{figure}[t!]
    \centering
    \begin{subfigure}[b]{ \flowFigSideWidth \textwidth}  
        \centering 
        \begin{tikzpicture}
            [ font = \flowFigFontSize,
              >/.tip={Triangle[length=7.5pt,width=\flowFigArrowSize,bend]},
              line width=1pt,
              my circle/.style={minimum width=\flowFigCrlSize, circle, draw},
              my label/.style={above=\flowFigLabelAbove, anchor=mid}
            ]
            \node (0) [my circle] {$0$};
            \node (1) [my circle, right=\flowFigCircleGap of 0] {$1$};
            \path [->, draw] (1.north east) arc (\flowFigArcParam) node [midway, my label] {$120$};
            \path [->, draw] (0.north east) arc (\flowFigArcParam) node [midway, my label] {$80$};
        \end{tikzpicture}
        \caption[]%
        {{\small Type 1: PARM}}    
        \label{fig:varying_I_eqm_flow_parm}
    \end{subfigure}
    ~    
    \begin{subfigure}[b]{ \flowFigSideWidth \textwidth}
        \centering
        \begin{tikzpicture}
            [ font = \flowFigFontSize,
              >/.tip={Triangle[length=7.5pt,width=\flowFigArrowSize,bend]},
              line width=1pt,
              my circle/.style={minimum width= \flowFigCrlSize, circle, draw},
              my label/.style={above=\flowFigLabelAbove, anchor=mid}
            ]
            \node (0) [my circle] {$0$};
            \node (1) [my circle, right=\flowFigCircleGap of 0] {$1$};
            \path [->, draw] (1.north east) arc (\flowFigArcParam) node [midway, my label] {$100$};
            \path [->, draw] (0.north east) arc (\flowFigArcParam) node [midway, my label] {$100$};
        \end{tikzpicture}
        \caption[Network2]%
        {{\small Type 1: PORM}}    
        \label{fig:varying_I_eqm_flow_porm}
    \end{subfigure}
    ~
    \begin{subfigure}[b]{ \flowFigMidWidth \textwidth}   
        \centering 
        \begin{tikzpicture}
            [ font = \flowFigFontSize,
              >/.tip={Triangle[length=7.5pt,width=\flowFigArrowSize,bend]},
              line width=1pt,
              my circle/.style={minimum width=\flowFigCrlSize, circle, draw},
              my label/.style={above=\flowFigLabelAbove, anchor=mid}
            ]
            \node (0) [my circle] {$0$};
            \node (1) [my circle, right=\flowFigCircleGap of 0] {$1$};
            \path [->, draw] (1.north east) arc (\flowFigArcParam) node [midway, my label] {$100+25$};
            \path [->, draw] (0.north east) arc (\flowFigArcParam) node [midway, my label] {$75$};
        \end{tikzpicture}
        \caption[]%
        {{\small Type 1: PORM Eq.}}    
        \label{fig:varying_I_eqm_flow_porm_equ}
    \end{subfigure}
    \vspace{-0.5\baselineskip}
    \caption[Flow]
    {Rider trips fulfilled by type-$1$ drivers, with $s=(0,200)$, $\theta=(1000,1000)$, $\alpha_{00} \hspace{-0.1em} = \hspace{-0.1em} \alpha_{11}  \hspace{-0.1em}=  \hspace{-0.1em}1$, $\alpha_{01}  \hspace{-0.1em}=  \hspace{-0.1em}\alpha_{10} \hspace{-0.1em} =  \hspace{-0.1em}0$,  
     and $I=0.2W$. \label{fig:varying_I_eqm_flow} }
\end{figure}

\paragraph{Varying Demand Ratio $\theta_0/\theta_1$.} 

In Figure~\ref{fig_eqm_theta}, we fix $\theta_1=1000$, $I/W=0.2$, and vary $\theta_0$ from $0$ to $2000$. We see that PARM revenue coincides with the first-best and significantly exceeds the revenue of PORM. The revenue and welfare of the equilibrium outcome under PORM is much lower, however, because drivers over-supply the preferred location~$1$, leaving rider trips in $0$ unfulfilled. It is curious that with highly unbalanced demand, an increase in $\theta_0$ initially leads to reduced equilibrium revenue and welfare---this is because with higher demand at location~$0$, PORM sets a higher price at location~$1$ and accepts fewer location~$1$ trips in order to complete more trips in~$0$. The drivers, however, are only willing to drive in $0$ when $\theta_0$ is high enough that the low probability of getting a ride in $1$ offsets the extra utility $I$.

\begin{figure}[t!]
    \centering
    \begin{subfigure}[b]{\subfigSize\textwidth}
        \centering
        \includegraphics[width=\textwidth]{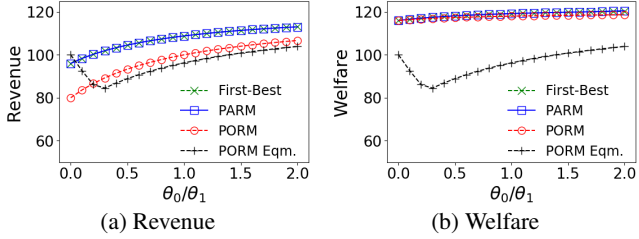}
        \vspace{-1.5\baselineskip}
        \caption[eqm_rev_i]%
        {{\small Revenue}} 
        \label{subfig_eqm_rev_t10_second}
    \end{subfigure}
    ~
    \begin{subfigure}[b]{\subfigSize\textwidth}  
        \centering 
        \includegraphics[width=\textwidth]{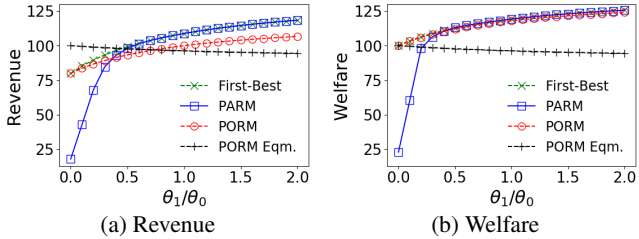}
        \vspace{-1.5\baselineskip}
        \caption[eqm_welf_i]%
        {{\small Welfare}}    
        \label{subfig_eqm_welfare_t10_second}
    \end{subfigure}
    \vspace{-1.5\baselineskip}
    \caption[]%
    {{ Equilibrium revenue and welfare varying $\theta_1/\theta_0$}} 
    \label{fig_eqm_theta}
\end{figure}

\paragraph{Varying Demand Ratio $\theta_1/\theta_0$.}  

In Figure~\ref{fig_eqm_t10}, we set $I/W=0.2$, $\theta_0 = 1000$ and vary $\theta_1$ from 0 to 2000. We see a similar trend as in Figure~\ref{fig:varying_theta}, with PARM doing worse than even equilibrium PORM for very small values of $\theta_1$. 
Figure~\ref{fig_flow_i} illustrates driver flow for $\theta_1 = 100 = 0.1\theta_0$--- PARM employs drivers to idle at location~1 in order to satisfy IC, and this is very costly.
It is worth noting that as $\theta_1$ increases, the social welfare achieved by the equilibrium outcome under PORM in fact does not increase, due to the increased amount of idle drivers at location~$1$.

\begin{figure}[htpb!]
    \centering
    \begin{subfigure}[b]{\subfigSize\textwidth}
        \centering
        \includegraphics[width=\textwidth]{b_adverse/profitt10.png}
        \vspace{-1.5\baselineskip}
        \caption[eqm_rev_i]%
        {{\small Revenue}} 
        \label{subfig_eqm_rev_t10_first}
    \end{subfigure}
    ~
    \begin{subfigure}[b]{\subfigSize\textwidth}  
        \centering 
        \includegraphics[width=\textwidth]{b_adverse/welfaret10.png}
        \vspace{-1.5\baselineskip}
        \caption[eqm_welf_i]%
        {{\small Welfare}}    
        \label{subfig_eqm_welfare_t10_first}
    \end{subfigure}
    \vspace{-1.5\baselineskip}
    \caption[]%
    {{ Equilibrium revenue and welfare varying $\theta_1/\theta_0$}.} 
    \label{fig_eqm_t10}
\end{figure}
\vspace{-3ex}
\begin{figure}[htpb!]
    \centering
    \begin{subfigure}[b]{ \flowFigSideWidth \textwidth}
        \centering
        \begin{tikzpicture}
            [ font = \flowFigFontSize,
              >/.tip={Triangle[length=7.5pt,width=\flowFigArrowSize,bend]},
              line width=1pt,
              my circle/.style={minimum width=\flowFigCrlSize, circle, draw},
              my label/.style={above=\flowFigLabelAbove, anchor=mid}
            ]
            \node (0) [my circle] {$0$};
            \node (1) [my circle, right=\flowFigCircleGap of 0] {$1$};
            \path [->, draw] (1.north east) arc (\flowFigArcParam) node [midway, my label] {$18.2$};
            \path [->, draw] (0.north east) arc (\flowFigArcParam) node [midway, my label] {$181.8$};
        \end{tikzpicture}
        \caption[Network2]%
        {{\small Type 1: PORM}}    
        \label{fig:mean and std of net14_second}
    \end{subfigure}
    ~
    \begin{subfigure}[b]{ \flowFigMidWidth \textwidth}   
        \centering 
        \begin{tikzpicture}
            [ font = \flowFigFontSize,
              >/.tip={Triangle[length=7.5pt,width=\flowFigArrowSize,bend]},
              line width=1pt,
              my circle/.style={minimum width=\flowFigCrlSize, circle, draw},
              my label/.style={above=\flowFigLabelAbove, anchor=mid}
            ]
            \node (0) [my circle] {$0$};
            \node (1) [my circle, right=\flowFigCircleGap of 0] {$1$};
            \path [->, draw] (1.north east) arc (\flowFigArcParam) node [midway, my label] {$18.2+4.5$};
            \path [->, draw] (0.north east) arc (\flowFigArcParam) node [midway, my label] {$177.3$};
        \end{tikzpicture}
        \caption[]%
        {{\small Type 1: PORM Eq.}}    
        \label{fig:mean and std of net34_first}
    \end{subfigure}
    ~
    \begin{subfigure}[b]{ \flowFigSideWidth \textwidth}  
        \centering 
        \begin{tikzpicture}
            [ font = \flowFigFontSize,
              >/.tip={Triangle[length=7.5pt,width=\flowFigArrowSize,bend]},
              line width=1pt,
              my circle/.style={minimum width=\flowFigCrlSize, circle, draw},
              my label/.style={above=\flowFigLabelAbove, anchor=mid}
            ]
            \node (0) [my circle] {$0$};
            \node (1) [my circle, right=\flowFigCircleGap of 0] {$1$};
            \path [->, draw] (1.north east) arc (\flowFigArcParam) node [midway, my label] {$50.0+50.0$};
            \path [->, draw] (0.north east) arc (\flowFigArcParam) node [midway, my label] {$100.0$};
        \end{tikzpicture}
        \caption[]%
        {{\small Type 1: PARM}}    
        \label{fig:mean and std of net24}
    \end{subfigure}
    \caption[Flow]
    {\small Rider trip and idle driver flows, with $\theta=(1000,100)$, $\alpha=[(1.0,0.0),(0.0,1.0)]$, $s=(0,200)$, and $I=0.2W$.}
    \label{fig_flow_i}
\end{figure}
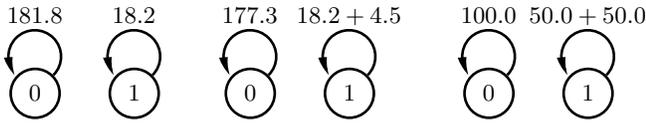

\paragraph{Varying Driver Supply $s_1$.} We now examine the effect of varying the supply of type $1$ drivers (while still keeping the supply of type~0 drivers at $0$). In Figure~\ref{fig_eqm_s}, we set $I=0.2W$, $\theta_0=\theta_1=1000$, and vary $s_1$ from $0$ to $1000$. Revenue and welfare under PARM coincide with first-best and outperform PORM. All the mechanisms improve in profit and welfare as supply increases, but PARM is better able to use the additional drivers. Under PORM in equilibrium, drivers again over-supply the preferred location~$1$, causing rides at location $0$ to get dropped. Eventually, there are so many drivers that they can fill all the demand, even with drivers idling at location $1.$ At this point, equilibrium PORM revenue coincides with PORM revenue, though the welfare is still lower.

\begin{figure}[htpb!]
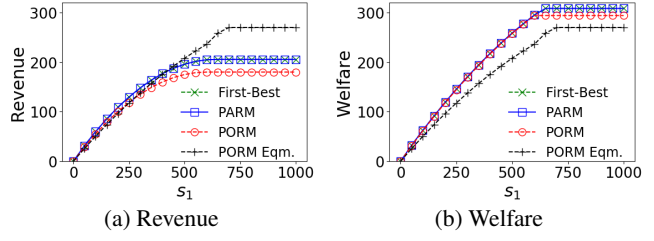

    \centering
    \begin{subfigure}[b]{\subfigSize\textwidth}
        \centering
        \includegraphics[width=\textwidth]{b_adverse/profits_1.png}
        \vspace{-1.5\baselineskip}
        \caption[eqm_rev_i]%
        {{\small Revenue}} 
        \label{subfig_eqm_rev_s}
    \end{subfigure}
    ~
    \begin{subfigure}[b]{\subfigSize\textwidth} 
        \centering 
        \includegraphics[width=\textwidth]{b_adverse/welfares_1.png}
        \vspace{-1.5\baselineskip}
        \caption[eqm_welf_i]%
        {{\small Welfare}}    
        \label{subfig_eqm_welfare_s}
    \end{subfigure}
    \vspace{-1.5\baselineskip}
    \caption[]
    {{ Equilibrium revenue and welfare varying $s_1$}} 
    \label{fig_eqm_s}
\end{figure}


\section{Discussion} \label{sec:conclusion}

We have proposed the Preference-Attentive Ridesharing Mechanism (PARM) for pricing and dispatch in the presence of driver location preferences. It is an equilibrium under PARM for drivers to  report their preferred locations truthfully and always provide service. PARM achieves first-best revenue in settings with unconstrained driver supply or symmetric rider demand, and we show via simulations that even outside those scenarios, PARM achieves close to first-best welfare and revenue and outperforms a mechanism that is oblivious to location preferences.

Our analysis suggests that incorporating drivers' location preferences is compatible with other aspects of ridesharing pricing and marketplace design---even though drivers could in principle game the system by expressing preferences for locations associated with more highly compensated rides.  There are two key elements to our approach that both seem likely to provide practical insight beyond the specific framework and mechanism considered here: First, we recognize that respecting drivers' location preferences creates value, which can at least partially substitute for cash compensation. Then, we incentivize truthful location preference revelation through a variation on a revealed preference approach. PARM uses drivers' deviations from proposed dispatches to learn about their preference types---a driver who chooses to drive to $i$ instead of her assigned location is inferred to prefer location $i$ and subsequently faces the compensation profile of other drivers with that preference.
For this approach to work, it is important that preferences do not change frequently over the course of the day. Otherwise, it would be much harder to enforce incentive compatibility by tracking endogenous responses to dispatch assignments.

\section*{Acknowledgments}

We appreciate the helpful comments of Peter Frazier, Jonathan Hall, Hamid Nazerzadeh, Michael Stewart, and the Lab for Economic Design. Rheingans-Yoo gratefully acknowledges the support of a Harvard Center of Mathematical Sciences and Applications  Economic Design Fellowship, a Harvard College Program for Research and Science and Engineering Fellowship, and the Harvard College Research Program. Rheingans-Yoo also appreciates the academic inspiration and personal support of Ross Rheingans-Yoo.
Kominers gratefully acknowledges the support of National Science Foundation grant SES-1459912, as well as the CMSA's Ng Fund and Mathematics in Economics Research Fund. 
Ma gratefully acknowledges the support of a Siebel scholarship.

\newpage


\bibliographystyle{named}
\bibliography{ijcai19}

\newpage

\appendix

\noindent{}\textbf{\huge{Appendix}}

\bigskip

\noindent{}We provide in Appendix~\ref{appx:proofs} the proofs that are omitted from the body of the paper. We present in Appendix~\ref{appx:no_penalty} an example of an incentive issue when no penalty is imposed.
\section{Proofs} \label{appx:proofs}

\subsection{Proof of Lemma~\ref{lem:c_forms_eq}} \label{appx:proof_lem_c_forms_eq}

\lemcformseq*

\begin{proof}
The first term of the objective of \eqref{equ:rev_opt_rewritten} is the same as the income portion of the objective of \eqref{equ:rev_opt}. The constraints  on this optimization are exactly the equilibrium conditions (C2), (C3) and (C5). Thus, it suffices to devise a compensation scheme where driver income everywhere is exactly equal to outside option, i.e.~$\pi_i\supt=w$ (this will necessarily satisfy (C1) and (C4)). This will be revenue-optimal because by (C4) drivers cannot be making less than $w$. Consider compensation scheme \eqref{equ:compensation}:
\begin{align}
    c_{ij} \supt \hspace{-0.2em} =
       \hspace{-0.2em} W \hspace{-0.2em} -I\cdot\one{i=\dtype}
        \notag
\end{align}
The second term means that any idiosyncratic utility a driver gets is extracted by the platform, so any dispatched driver makes exactly $W$ in that period. Thus, $\pi_i\supt=w=\frac{W}{1-\beta}$ exactly when probability of dispatch at every location is 1, i.e.~ 
\[
 x_{i} \supt =  \sum_{j \in \location} f_{ij} \supt + y_{ij} \supt, ~\forall i \in \location,~ \forall \dtype \in \location
\]
This is equivalent to the second-to-last constraint of \eqref{equ:rev_opt_rewritten}, so any optimal solution to \eqref{equ:rev_opt_rewritten} will have $\pi_i\supt=w,\ \forall i\in\location$, so (C1) and (C4) are satisfied. Thus an optimal solution to \eqref{equ:rev_opt_rewritten} corresponds to an optimal solution to \eqref{equ:rev_opt} under compensation scheme \eqref{equ:compensation}.
\end{proof}

\subsection{Proof of Theorem~\ref{thm:parm_IC}}
\label{appx:proof_thm_parm_IC}
\thmparmic*
\begin{proof}
By Theorem~\ref{thm:IC_assuming_no_deviation}, if a driver is going to always provide service, she cannot profitably misreport. By Lemma~\ref{lem:strat}, if a driver reports her type truthfully, it is not profitable for her to strategically decline to provide service. So for a misreport to be profitable, it must be paired with post-reporting deviation (strategically declining to provide service). Lemma~\ref{lem:strat} characterizes what this deviation must be: providing service everywhere except her 
reported preferred location. There she instead drives to her true preferred location. 

Thus to show incentive compatibility without penalty, it suffices to show this strategy is not more profitable than truthfully reporting and always providing service. Recall how we calculate penalties:
\begin{align}
        \pi_i^{k\ra\dtype}=&\one{i\neq\dtype}\left(W+\delta\sum_j \dfrac{f_{ij}\supt+y_{ij}\supt}{x_i}\pi_j^{k\ra\dtype}\right)\notag\\
        &+I\cdot\one{i=k}+\one{i=\dtype}(\delta w - P^{k\ra\dtype}),\forall i,k\notag;\\
        w=&\sum_i \dfrac{x_i\supt}{\sum_j x_j\supt}\pi_i^{k\ra\dtype},\forall\notag k\label{equ:misreport_total}.
\end{align}
Where $P_\dtype\triangleq \max \{ \max_{k \in } \{ P^{k \ra \dtype} \},~ 0\}$. If we had set penalties for switching types from $k$ to $\dtype$ to be $P^{k\ra\dtype}$, a driver of type $k$ pretending to be type $\dtype$ following the strategy from Lemma~\ref{lem:strat} would get utility as defined in \eqref{equ:misreport_total}. The first term is the compensation from providing service at locations that are not $\dtype$. The second term is idiosyncratic utility from being at her favorite location. The third term is the utility from declining to provide service at $\dtype$ and driving to $k$, after which she pays the penalty, the platform updates her type, and she makes $w$ afterwards. Then with random initialization, by \eqref{equ:misreport_total} the driver's expected utility is exactly $w$. Because we set $P_\dtype=\max_k P^{k\ra\dtype}$, The equality in \eqref{equ:misreport_total} is an inequality, but still the driver's expected utility is no more than $w$, which is what she would make by reporting truthfully and always providing service. Therefore, it is a best response for each driver to report truthfully and always provide service.
\end{proof}

\subsection{Proof of Lemma~\ref{lem:strat}}
\label{appx:proof_lemma_strat}
\lemstrat*
\begin{proof}

Suppose we have a driver of type $\dtype$ who has reported she was of type $j$. If $j=\dtype$, then the driver makes $W$ every period she provides service. If $j\neq\dtype$, the driver makes $W+I$ every period she provides service at $\dtype$ (extra utility but not paid less), makes $W-I$ every period they provide service at $j$ (paid less but not extra utility), and makes $W$ every period elsewhere.

First, we will show that the driver will never choose to not provide service and drive to $\dtype$. This move will not change how the platform treats the driver in the future. So in either case ($j=\dtype$ and $j\neq\dtype$) the driver has lost $W$ in income and then relocated to the location where she makes weakly least in the network. This is not profitable, so the driver will not choose to not provide service and drive to $\dtype$.

Next, we will show that if a driver chooses to not provide service, she will drive to her true preferred location. Suppose this driver of type $\dtype$ declines to provide service. As established in the previous paragraph, she will not relocate to her previously reported preferred location, so she may pay a penalty, but the penalty is the same no matter where she relocates to. she has the choice of where to drive in the network and will choose the location $i$ that maximizes her expected lifetime earnings \textit{given that she will be treated as type-$i$ in the future}. If location $i$ is such a best-response choice, then for every future deviation, the symmetry of the situation implies that $i$ will be a best-response choice then too. So we can assume that for all future deviations, the driver drives to $i$.

In the case that $i = \dtype$, the driver expects to make $W = w(1-\delta)$ in every period after arriving at location $\dtype$. Now suppose that $i \neq \dtype$. The driver gets utility $W + I$ every period she is at location $\dtype$, and $W-I$ every period she is at $i$, and $W$ everywhere else. With $\delta\rightarrow 1$, this means choice of $i \neq \dtype$ can only be better than $\dtype$ if she spends more time at $\dtype$ than $i$ before her next deviation, at which point she faces the same choice. However, we know that $x_{i}^{(i)} \geq x_{\dtype}^{(i)}$, which by the Ergodic Theorem implies that a driver treated as type-$i$ will spend on average at least as much time at location $i$ than location $\dtype$. This implies that if the deviant driver starts at location $i$, the expected number of times she visits location $\dtype$ before returning to $i$ is at most 1 (otherwise she would spend more time on average at $\dtype$ than $i$). So for any $t$, $\E{N_{\dtype}^{(i)}(t)} \leq \E{ N_{i}^{(i)}(t)}$, where $t$ is the number of time periods and $N_{j}^{(i)}(t)$ is the number of periods in which the driver is in location $j$. So there is no time in the future by which point the driver expects to have been at $\dtype$ more than $i$ if she follows platform instructions. The driver can deviate from platform instructions, but as established previously, without loss of best response she will drive to location $i$, which puts in the same position as before.

Finally, we will show that a driver will choose to provide service everywhere if $j=\dtype$ and will choose to provide service everywhere except possibly $j$ if $j\neq\dtype$. We have already established that a driver's best relocation is her true preferred location but that a driver will not decline to provide service and drive to her previously reported preferred location. It follows immediately that truthful drivers (those whose previously reported preferred location is their true preferred location) will always provide service. So we assume $j\neq\dtype$. Because a driver's post-deviation relocation is her preferred relocation, she will make $W$ in every period thereafter, after paying the platform a penalty. Before deviation, the driver makes $W+I$ at $\dtype$, $W-I$ at $j$, and $W$ everywhere else. So the only location she might not want to provide service at is $j$. Everywhere else her earnings are the same as post-deviation, and the only time it will be less is when at $j$. So the driver will always choose to provide service at locations other than $j$.
\end{proof}

\subsection{Proof of Theorem~\ref{thm:inf_supply}} \label{appx:proof_thm_inf_supply}

\thminfinitesupply*

\begin{proof}
We show full-information first-best revenue by showing the IC constraint does not bind. Consider an optimal solution to \eqref{equ:rev_opt_rewritten} without IC constraint \eqref{equ:ic_constraint}. The flow constraint allows us to decompose the flow of type-$\dtype$ drivers into cycles with various mass. Suppose one of these cycles does not go through location $\dtype$. Then it is optimal to replace that driver flow with drivers of type-$j$, for location $j$ in the cycle. (this can necessarily happen because there is no supply constraint). The new drivers do exactly what the old drivers did, so the demand met is exactly the same and the flow constraints are still satisfied. However, they end up in their preferred location strictly more, so the last term of the objective strictly increases. So the previous solution was not optimal. So all the cycles of type-$\dtype$ driver flow go through location $\dtype$. This means $x_{\dtype}\supt\geq x_j\supt,\forall \dtype,j$ because all flow of $\dtype$ drivers through $j$ also goes through $\dtype$. So any optimal solution to \eqref{equ:rev_opt_rewritten} naturally satisfies the IC constraint \eqref{equ:ic_constraint}. So imposing the IC constraint does not lead to an objective loss.

We show no penalty is necessary to achieve incentive compatibility by showing that the strategy described in Lemma~\ref{lem:strat} is not profitable. Consider $\delta\ra 1$ and suppose a driver of type $\dtype$ reports he is of type $i$. Every time he visits location $\dtype$, he will visit location $i$ before returning to $\dtype$. Otherwise, there would be some non-negative flow in a cycle through $\dtype$ but not through $i$, which the platform could more optimally fill with reported 
type-$\dtype$ drivers. So the driver will always eventually visit $i$ and will never visit location $\dtype$ more than once before doing so. The driver makes $W+I$ at location $\dtype$, then $W$ every period before giving a ride to $i$, then $0$ at location $i$, before relocating to $\dtype$ and being thereafter treated as $\dtype$, making $W$ in every subsequent period. So compared to truthful behavior, the driver makes a maximum of $I$ extra at $\dtype$ and loses a minimum of $W\geq$ at location $i$ before revealing his true type and making $W$ every period thereafter. With $\delta\ra 1$, this strategy is not more profitable than truthful reporting and always providing service. So even without a penalty, PARM is incentive compatible. 
\end{proof}

\subsection{Proof of Lemma~\ref{lem:self_fill}} \label{appx:proof_lem_sef_fill}

\lemmaselffill*

\begin{proof}
Consider an arbitrary $i,\dtype$. First notice that in any optimal solution, $y_{ii}\supt=y_{\dtype\dtype}\supt=0$ because these are the drivers are not fulfilling rides and staying in the same location, which adds cost to the system without fulfilling demand or helping to satisfy any of the constraints. Next, if $f_{ii}\supt>0$ and $f_{\dtype\dtype}^{(j)}>0$ for $j\neq \dtype$, then it is more optimal to switch drivers of type $j$ going $\dtype\ra \dtype$ with drivers of type $\dtype$ going $i\ra i$. This fulfills the same demand, but with less cost because you have drivers of type $\dtype$ at location $\dtype$ more. This violates the optimality of the original solution, so we can assume that $f_{ii}\supt>0\im f_{\dtype\dtype}^{(j)}=0\ \forall j\neq \dtype$ and that $\exists j\neq \dtype$ s.t. $f_{\dtype\dtype}^{(j)}>0\im f_{ii}\supt=0$. So we find ourselves in one of two cases: \\

\noindent{}{\em Case 1.} $f_{ii}\supt=0$: If this is the case, then $f_{ii}\supt\leq f_{\dtype}\supt$ because $f_{\dtype\dtype}\supt\geq 0$.

\noindent{}{\em Case 2.} 
$f_{ii}\supt>0$: Then $f_{\dtype\dtype}^{(j)}=0\ \forall j\neq \dtype$. Suppose $f_{ii}\supt>f_{\dtype\dtype}\supt$. Then $\sum_j f_{ii}^{(j)}>\sum_j f_{\dtype\dtype}^{(j)}\im p_{ii}<p_{\dtype\dtype}$ because the demand pattern is symmetric. I will show it is more optimal for the platform to raise $p_{ii},f_{\dtype\dtype}\supt$ and lower $p_{\dtype\dtype},f_{ii}\supt$ by infinitesimal amounts. Making substitutions and differentiating, we get the derivative of the first term of the objective with respect to $p_{ii}$ is given as follows:
    \[
    \dfrac{d}{dp_{ii}}\obj_1=1-2\dfrac{p_{ii}}{\theta_i\alpha_{ii}}
    \]
    We know that $\theta_i\alpha_{ii}=\theta_t\alpha_{tt}$ and that $p_{ii}<p_{\dtype\dtype}$. So
    \[
    \dfrac{d}{dp_{ii}}\obj_1>\dfrac{d}{dp_{\dtype\dtype}}\obj_1
    \]
    So the platform can make a marginal increase in the first term of the objective by raising $p_{ii}$ and lowering $p_{\dtype\dtype}$ by infinitesimal amounts, shifting an infinitesimal amount of $f_{ii}\supt$ to $f_{\dtype\dtype}\supt$ to make the market clear. The same number of drivers are in the system, so it does not change the second term of the objective. And it increases the third term in the objective because we just shifted drivers of type $\dtype$ to only be at location $\dtype$. So this shift brings us to a more optimal solution, which is a contradiction. So without the IC constraint imposed, $f_{ii}\supt\leq f_{\dtype\dtype}\supt\forall i,\dtype.$
\end{proof}

\subsection{Proof of Lemma~\ref{lem:bilateral}}
\label{appx:proof_lem_bilateral}

\lemmabilateral*

\begin{proof}
Consider an optimal solution to the platform's optimization problem. Now set $p_{ij}=p_{ji}$ and $f_{ij}\supt=\frac{1}{2}(f_{ij}\supt+f_{ji}\supt),y_{ij}\supt=\frac{1}{2}(y_{ij}\supt+y_{ji}\supt)$. The same number of drivers of each type is used, so the supply constraint is still satisfied. Flow into and out of a location before was the same, so it will be the same now too, so the flow constraint is satisfied. And the symmetry of the demand pattern implies that $\theta_i\alpha_{ij}=\theta_j\alpha_{ji}$ and so if the market cleared before, setting prices and flows to be the same will still clear the market.

So all we need to show is that revenue is weakly better when $p_{ij}=p_{ji}$, holding number of drivers used (and thus $p_{ij}+p_{ji}$) constant. Let $p_{ij}+p_{ji}=q$. Then isolating the part of the objective that changes under this switch yields:
\begin{align*}
    &\theta_i\alpha_{ij}p_{ij}(1-p_{ij})+\theta_j\alpha_{ji}p_{ji}(1-p_{ji})\\
    =&\theta_i\alpha_{ij}p_{ij}(1-p_{ij})+\theta_i\alpha_{ij}(q-p_{ij})(1-(q-p_{ij}))\\
    \frac{\partial}{\partial p_{ij}}=&\theta_i\alpha_{ij}\Big(1-2p_{ij}-1+2q-2p_{ij}\Big)\\
    0=&2q-4p_{ij}\\
    p_{ij}=&\frac{q}{2}
\end{align*}
So holding $p_{ij}+p_{ji}$ constant, equal prices is optimal. So any solution to the optimization problem can be made without loss of optimality into a solution with $f_{ij}\supt=f_{ji}\supt, y_{ij}\supt=y_{ji}\supt\forall i,j,\dtype$. Having drivers flowing back and forth without giving rides is sub-optimal, so this implies $y_{ij}\supt=y_{ji}\supt=0$. 
\end{proof}

\subsection{Proof of Theorem~\ref{thm:fb_symmetric_demand}}
\label{appx:proof_thm_symm_demand}

\thmfbsymmdemand*

\begin{proof}
We will show full-information first-best revenue by showing the IC constraint does not bind. By Lemma~\ref{lem:self_fill}, the symmetry of the demand pattern means that $f_{ii}\supt\leq f_{\dtype\dtype}\supt\ \forall i,\dtype$. By Lemma~\ref{lem:bilateral}, we can restrict our attention to solutions where all flow of drivers is bilateral and there are no floating drivers. Suppose in an optimal solution $x_i\supt>x_\dtype\supt$. This implies $\sum_{j\neq i}f_{ij}\supt>\sum_{j\neq \dtype}f_{\dtype j}\supt$, which in turn implies that there exists a location $j$ s.t.~$f_{ij}\supt=f_{ji}\supt>f_{\dtype j}\supt=f_{j\dtype}\supt$. We find ourselves in one of two cases:
\begin{enumerate}[1.]
    \item Exists $k\neq \dtype$ s.t.$f_{\dtype j}^{(k)}=f_{j\dtype}\supt>0$. Then we can switch some type-$\dtype$ drivers going $i\rightarrow \dtype\ra i$ with some type k drivers going $\dtype\ra j \ra \dtype$. Formally, we make $f_{\dtype j}^{(k)},f_{j\dtype}^{(k)},f_{ij}\supt,f_{ji}\supt$ smaller by $\epsilon$ and make $f_{\dtype j}\supt,f_{j\dtype}\supt,f_{ij}^{(k)},f_{ji}^{(k)}$ larger by $\epsilon$. All the same demand is filled, the same number of drivers of each type are used, and we've swapped drivers in a way that preserves the flow constraint, so all the constraints are satisfied. We've just strictly increased $x_\dtype\supt$ while weakly increasing $x_{kk}$ (increase in the case that $k=i$), which decreases our cost. This improvement is a contradiction with the optimality of the original solution.
    \item DNE $k\neq \dtype$ s.t. $f_{\dtype j}^{(k)}=f_{j\dtype}\supt>0$. Then $\sum_k f_{ij}^{(k)}=\sum_k f_{ji}^{(k)}>\sum_k f_{\dtype j}^{(k)}=\sum_k f_{j\dtype}^{(k)}$. The symmetry of the demand pattern then implies that $p_{ij}=p_{ji}<p_{\dtype j}=p_{j\dtype}$, and so it is an objective improvement to raise $p_{ij},p_{ji},f_{\dtype j}\supt,f_{j\dtype}\supt$ by $\epsilon$ and lower $p_{\dtype j},p_{j\dtype},f_{ij}\supt,f_{ji}\supt$ by $\epsilon$. This violates the optimality of the original solution.
\end{enumerate}
So there exists an optimal bilateral-flow solution, and this solution must have $x_i\supt\leq x_\dtype\supt\forall i,\dtype$.

We will show no penalty is necessary to achieve incentive compatibility by showing that the deviation described in Lemma~\ref{lem:strat} is not profitable for the solution constructed in Lemma~\ref{lem:bilateral}. Consider $\delta\ra 1$. We know that for all $i,j,\dtype$ that $f_{i\dtype}\supt\geq f_{ij}\supt$, so a driver of reported type 
$\dtype$ is more likely to be sent to $\dtype$ than $j$ no matter what location she is at. So, with random initialization of drivers, a driver of type $j$ pretending to be type $\dtype$ is more likely to be sent to $\dtype$ before than $j$ than the other way around. Even if sent to $j$ first, she is more likely than not to visit $\dtype$ before visiting $j$ again. When the driver visits $j$, she declines to provide service, giving up $W$ in income and then relocating to $\dtype$, making $W$ in every subsequent period. When the driver visits $\dtype$, she gets $I\leq W$ in extra idiosyncratic utility. However, the probability she visits $i$ (and lose $W)$ before $\dtype$ is at least $\frac{1}{2}$. So from the beginning of the game to the end of her first period in location $\dtype$ or $i$, she loses more in expectation than she gains. Even if she is sent to $\dtype$ first, the probability she visits $i$ (losing $W$ and then revealing her true preference) before returning to $\dtype$ and making another $I$ is less than $\frac{1}{2}$, so she is still making less than if she had truthfully reported and always provided service.
\end{proof}

\section{Incentive Problem Without Penalty}
Figure 10 illustrates driver flow for $\theta_1=\theta_0=100$, $\alpha=[(1.0,0.0),(0.0,1.0)]$, $s=(200,5)$, $I=0.2W$. Importantly, there is no demand between locations, so once a driver is assigned to a location, they will continue giving rides there and never visit another location, unless they decline to provide service in a period and relocate. Then a type 1 driver has a useful deviation. She reports that she is type 0, then if she is sent to location 1, she provides service in every period. Thus she prefers location 1 but is being paid as if she does not, and is never sent to location 0 (where she would be paid less) because there is no demand between locations. If this happens, she gets utility $w+\frac{I}{1-\delta}=48$ over her lifetime. If instead she is sent to location 0, she does not provide service the first period and relocates to location 1, after which she provides service and the platform treats her as type 1. If this happens, she gets utility $w-W=39.4$ over her lifetime. So her expected lifetime utility from this deviation is $\frac{34}{59}\cdot 39.4+\frac{25}{59}\cdot 48\approx 43.16$, which is greater than her utility from reporting truthfully and providing service, which is $w=40$. So this deviation is useful. The key thing going on here is that although more type 0 drivers are at location 0 than 1, an individual type 0 driver might spend their entire lifetime at location 1, which incentivizes type 1 drivers to misreport, taking the risk of one lost period of income in order to get a lifetime of idiosyncratic utility. In this case, the markov chain describing the movement of a type 0 driver is disconnected, but this issue also occurs when the markov chain is only very weakly connected and the driver is balancing one period of income versus many periods of idiosyncratic utility. The penalty is calculated such that if a driver does the deviation described here, the extra loss when they decline to provide service makes the deviation not useful. And in our proof of Theorem~\ref{thm:parm_IC}, we show that this deviation is the best of all non-truthful strategies, so the penalty ensures incentive compatibility.
\label{appx:no_penalty}
\begin{figure}[htpb!]
    \centering
    \begin{subfigure}[b]{ \flowFigSideWidth \textwidth}
        \centering
        \begin{tikzpicture}
            [ font = \flowFigFontSize,
              >/.tip={Triangle[length=7.5pt,width=\flowFigArrowSize,bend]},
              line width=1pt,
              my circle/.style={minimum width=\flowFigCrlSize, circle, draw},
              my label/.style={above=\flowFigLabelAbove, anchor=mid}
            ]
            \node (0) [my circle] {$0$};
            \node (1) [my circle, right=\flowFigCircleGap of 0] {$1$};
            \path [->, draw] (1.north east) arc (\flowFigArcParam) node [midway, my label] {$25$};
            \path [->, draw] (0.north east) arc (\flowFigArcParam) node [midway, my label] {$34$};
        \end{tikzpicture}
        \caption[Network2]%
        {{\small Type 0: PARM}}    
        \label{fig:mean and std of net14_first}
    \end{subfigure}
    ~
    \begin{subfigure}[b]{ \flowFigMidWidth \textwidth}   
        \centering 
        \begin{tikzpicture}
            [ font = \flowFigFontSize,
              >/.tip={Triangle[length=7.5pt,width=\flowFigArrowSize,bend]},
              line width=1pt,
              my circle/.style={minimum width=\flowFigCrlSize, circle, draw},
              my label/.style={above=\flowFigLabelAbove, anchor=mid}
            ]
            \node (0) [my circle] {$0$};
            \node (1) [my circle, right=\flowFigCircleGap of 0] {$1$};
            \path [->, draw] (1.north east) arc (\flowFigArcParam) node [midway, my label] {$5$};
            \path [->, draw] (0.north east) arc (\flowFigArcParam) node [midway, my label] {$0$};
        \end{tikzpicture}
        \caption[]%
        {{\small Type 1: PARM}}    
        \label{fig:mean and std of net34_second}
    \end{subfigure}
    ~
    \caption[Flow]
    {\small Rider trip and idle driver flows, with $\theta=(100,100)$, $\alpha=[(1.0,0.0),(0.0,1.0)]$, $s=(200,5)$, and $I=0.2W$.}
    \label{fig_no_penalty}
\end{figure}
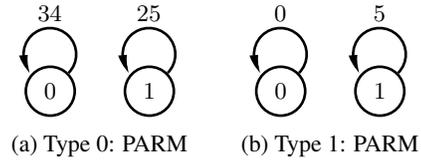

\end{document}